\newtheorem{pos}{Proposition}
\newtheorem{Corollary }{Corollary }
\newenvironment{proof}{{\indent \indent \it Proof:}}{\hfill $\blacksquare$\par}
\newtheorem{Lemma}{Lemma}
\newtheorem{Theorem}{Theorem}
\def\BibTeX{{\rm B\kern-.05em{\sc i\kern-.025em b}\kern-.08em
    T\kern-.1667em\lower.7ex\hbox{E}\kern-.125emX}}
\begin{document}

\title{Engineering Favorable Propagation: Near-Field IRS Deployment for Spatial Multiplexing}

\author{%
\IEEEauthorblockN{%
Qingqing~Wu, \textit{Senior Member, IEEE}, Yuxuan~Chen, Guangji~Chen,
Qiaoyan~Peng,}\\%
and Wen~Chen, \textit{Senior Member, IEEE}%
\vspace{-15pt}
\thanks{Q. Wu's work is supported by National Key R\&D Program of China (2023YFB2905000) and NSFC 62371289. Qingqing~Wu, Yuxuan~Chen, and Wen~Chen are with the Department of Electronic Engineering, Shanghai Jiao Tong University, Shanghai 200240, China (e-mail: {yuxuanchen@sjtu.edu.cn; qingqingwu@sjtu.edu.cn; wenchen@sjtu.edu.cn}). Guangji Chen is with the School of Electronic and Optical Engineering, Nanjing University of Science and Technology, Nanjing 210094, China  and
also with National Mobile Communications Research Laboratory, Southeast University (email:{guangjichen@njust.edu.cn}). Qiaoyan Peng is with the State Key Laboratory of Internet of Things for Smart City, University of Macau, Macao 999078, China, and also with the Department of Electronic Engineering, Shanghai Jiao Tong University, Shanghai 200240, China (e-mail: {qiaoyan.peng@connect.um.edu.mo}).}
}

\maketitle
\begin{abstract}
In intelligent reflecting surface (IRS)-assisted multiple-input multiple-output (MIMO) systems, a strong line-of-sight (LoS) link is required to compensate for the severe cascaded path loss. However, such a link renders the effective channel highly rank-deficient and fundamentally limits spatial multiplexing. To overcome this limitation, this paper leverages the large aperture of sparse arrays to harness near-field spherical wavefronts, and establishes a deterministic deployment criterion that strategically positions the IRS in the near-field of a base station (BS). This placement exploits the spherical wavefronts of the BS–IRS link to engineer decorrelated channels, thereby fundamentally overcoming the rank-deficiency issue in far-field cascaded channels. Based on a physical channel model for the sparse BS array and the IRS, we characterize the rank properties and inter-user correlation of the cascaded BS–IRS–user channel. We further derive a closed-form favorable propagation metric that reveals how the sparse array geometry and the IRS position can be tuned to reduce inter-user channel correlation. The resulting geometry-driven deployment rule provides a simple guideline for creating a favorable propagation environment with enhanced effective degrees of freedom. The favorable channel statistics induced by our deployment criterion enable a low-complexity maximum-ratio transmission (MRT) precoding scheme. This serves as the foundation for an efficient algorithm that jointly optimizes the IRS phase shifts and power allocation based solely on long-term statistical channel state information (CSI). Simulation results validate the effectiveness of our deployment criterion and demonstrate that our optimization framework achieves significant performance gains over benchmark schemes.
\end{abstract}
\vspace{-3mm}
\begin{IEEEkeywords}
IRS, sparse MIMO, near-field communications, IRS deployment.
\end{IEEEkeywords}

\section{Introduction}
The pursuit of unprecedented spectral efficiency and ultra-high spatial resolution for sixth-generation (6G) systems has spurred the emergence of extremely large-scale multiple-input multiple-output (XL-MIMO) as a foundational technology \cite{r1,10379539,9903389,10068140,10545312}. By deploying antenna arrays with massive apertures, often operating at millimeter-wave or Terahertz frequencies, the communication environment inherently transitions from the conventional far-field to the more complex near-field region \cite{r21}. This near-field regime is characterized by a spherical electromagnetic wavefront, a fundamental shift that imbues the channel with a rich, distance-dependent spatial structure\cite{r8}. The resulting additional spatial dimension allows for user separation based on not only angle but also range, creating new degrees of freedom (DoF) for communication \cite{r6}. Harnessing these near-field effects offers a new paradigm for enhancing spatial multiplexing and diversity, establishing near-field XL-MIMO as a key enabler for the ambitious performance targets of 6G\cite{10944643}.

However, realizing these benefits in practice is challenged by hardware constraints, cost, and power limitations. To address these challenges, sparse array architectures, where antenna elements are spaced wider than the traditional half-wavelength distance, have been introduced as an efficient means to expand the effective aperture without proportionally increasing hardware complexity \cite{r8}. Sparse uniform arrays (SUAs) can significantly enlarge the near-field region and enhance spatial focusing capability, making them particularly suitable for near-field communications. Recent near-field sparse array studies\cite{r7,10681123,10930389,zhou2024sparse} further show that enlarging the effective array aperture via increased element spacing can extend the radiative near-field coverage and improve spatial resolution at essentially unchanged hardware cost. Nonetheless, the enlarged spacing substantially reshapes the array’s spatial response and maps users with distinct physical locations onto highly coupled effective channels, which leads to severe inter-user interference (IUI)\cite{10563980}. Consequently, interference suppression in sparse MIMO systems remains a pressing research challenge, as conventional linear precoding and far-field-based designs, which rely on well-separated angular signatures, fail to cope with the spatial aliasing and channel correlation intrinsic to sparse geometries\cite{10829761}.

Intelligent reflecting surfaces (IRSs) have emerged as a transformative technology for reengineering the wireless propagation environment \cite{r10,r24,wu2025intelligent,10382681,9722711,10475359}. By controlling the phase shifts of a large number of low-cost passive elements, the IRS can manipulate signal reflections to constructively enhance desired links and destructively suppress interference\cite{r22}. 
A significant body of research has focused on exploiting this programmability for multiuser interference management in MIMO networks. To this end, the work \cite{r16} proposed a joint active and passive beamforming design to minimize interference leakage. Building upon these interference mitigation principles, the authors of \cite{r12} established a deterministic interference subspace alignment (ISA) framework which achieves the full multiplexing gain by completely eliminating interference. However, such methods often rely on instantaneous channel state information (CSI), which can impose prohibitive estimation overhead. The practical challenge of this overhead motivated subsequent research into designs based on slower-varying statistical CSI. In this domain, \cite{r20} demonstrated that such approaches can optimize long-term performance metrics, such as ergodic sum-rate, with significantly reduced feedback, and \cite{r23} further analyzed the associated power scaling laws and phase shift optimization. Beyond algorithmic design, the physical deployment and topology of IRSs have also been exploited as effective DoF for suppressing inter-stream and inter-user interference\cite{11007277}. Specifically, the authors in \cite{r14} proposed an orthogonal deployment of multiple IRSs, with spatially separated and oriented surfaces yielding orthogonal effective channels and mitigating mutual interference. Moreover, \cite{ShiJin-Channel} showed that jointly optimizing the IRS deployment and reflection design can customize the composite channel rank and singular-value profile.

Despite these advances, a fundamental gap remains in understanding and designing IRS assisted sparse MIMO systems. Most existing IRS works assume purely far-field propagation, in which the base station (BS)–IRS channel collapses to a rank one structure, leading to fully correlated user channels and severely limiting multi-user multiplexing \cite{r17}. Recent near-field studies \cite{r18,10098681} have started to model spherical wavefronts in XL-MIMO, revealing that wavefront curvature can significantly enhance DoF. Nevertheless, a critical open problem remains in translating this insight into practical design principles that can deterministically connect array sparsity and IRS placement with channel decorrelation. Furthermore, existing optimization frameworks typically depend on instantaneous CSI across the cascaded BS–IRS–user link, imposing prohibitive estimation overhead in large-scale deployments \cite{r20}.  Dynamic IRS adaptation as in \cite{partiIRS} may further improve performance, but at the cost of higher channel estimation and signaling overhead. Thus, there is an urgent need for geometry-driven, deterministic, and statistically robust frameworks capable of harnessing near-field effects, sparse apertures, and intelligent reflection control to achieve scalable interference suppression and performance maximization.

In this paper, we develop a unified theoretical and algorithmic framework for sparse IRS assisted multiuser MIMO systems operating in a near-field regime. Building on a rigorous physical channel model, we jointly address the deterministic channel correlation, IRS deployment geometry, and statistical beamforming optimization. The key contributions are summarized as follows:
\begin{itemize}
  \item First, we investigate the IRS-assisted multi-user communication system employing a hybrid sparse MIMO architecture, where a BS is equipped with a sparse uniform linear array (S-ULA) and the IRS is a sparse uniform planar array (S-UPA). To unveil the fundamental performance limits, we analyze the system from a spatial multiplexing perspective, utilizing the effective degree of freedom (EDoF)\cite{r29} as the evaluation metric. Specifically, we show that the enlarged apertures enabled by sparse arrays accentuate near-field spherical wavefronts on the BS–IRS link, thereby providing increased spatial resolution and improved user separability compared with the far-field regime. As a result, this sparse near-field configuration can achieve a substantially higher EDoF than conventional far-field IRS configuration in the LoS region of a BS.

  \item Next, we develop an analytical framework to characterize the rank properties of the cascaded BS–IRS–user channel. In particular, we demonstrate that while far-field planar waves lead to rank-deficient channels, strategic IRS placement in the near-field of a BS leverages spherical wavefronts to create statistically favorable propagation enhanced sub-channels, thereby overcoming the inherent rank-deficiency of far-field cascaded channels. Based on this analysis, we derive a deterministic deployment criterion that provides a fundamental physical design principle for engineering channel statistics and enabling spatial multiplexing gain.
  
  \item Finally, we propose an efficient algorithm to maximize the ergodic sum-rate by leveraging statistical CSI. The BS adopts low-complexity maximum-ratio transmission (MRT) precoding, while an alternating optimization procedure is developed to jointly design the IRS phase shifts and user power allocation based solely on long-term channel statistics. Notably, we derive closed-form solutions for the subproblems within the iterative procedure. Numerical results demonstrate that, under the proposed deployment and algorithm, inter-user interference is effectively suppressed and the achieved EDoF and ergodic sum-rate are substantially improved over conventional IRS configurations, approaching the interference-free benchmark.
\end{itemize}

The remainder of the paper is organized as follows. Section II presents the system and channel models. Section III analyzes far- and near-field cascaded channels and derives the deployment criterion. Section IV proposes the statistical-CSI algorithm for ergodic sum-rate maximization. Section V provides numerical validations and performance evaluations. Section VI concludes the paper. The main system symbols used throughout the paper are summarized in Table ~\ref{tab:notations}. 

\textit{Notations}: We use italic letters for scalars, boldface lower-case letters for vectors, and boldface upper-case letters for matrices. The operators $\mathrm{tr}(\cdot)$ and $|\cdot|_F$ stand for the trace and Frobenius norm, respectively. The operator $\mathrm{diag}(\mathbf{a})$ returns a diagonal matrix with the elements of vector $\mathbf{a}$ on its main diagonal. The statistical expectation and variance are denoted by $\mathbb{E}[\cdot]$ and $\mathrm{var}\{\cdot\}$. The function $\mathrm{gcd}(a,b)$ denotes the greatest common divisor of integers $a$ and $b$. Finally, $\mathcal{CN}(\boldsymbol{\mu}, \mathbf{\Sigma})$ denotes the circularly-symmetric complex Gaussian distribution with mean $\boldsymbol{\mu}$ and covariance matrix $\mathbf{\Sigma}$.
\begin{table*}[!t]
\caption{ Table of Notations}
\label{tab:notations}
\centering

\scriptsize
\renewcommand{\arraystretch}{1.08}
\setlength{\tabcolsep}{4pt}
\begin{tabular}{p{0.17\textwidth}|p{0.77\textwidth}}
\hline
$M$ & Number of BS antennas in the sparse ULA. \\
\hline
$N=N_uN_v$ & Total number of IRS reflecting elements, where $N_u$ and $N_v$ are the numbers of elements along the $y$- and $z$-axes, respectively. \\
\hline
$K$ & Number of users. \\
\hline
$d_{\mathrm{BS}},\,d_{\mathrm{IRS}}$ & BS and IRS element spacings, respectively. \\
\hline
$\zeta_{\mathrm{BS}},\,\zeta_{\mathrm{IRS}}$ & Sparsity factors of the BS array and IRS, respectively. \\
\hline
$\mathbf{r}_{\mathrm{IRS}}=[l_x,l_y,l_z]^T,\; l=\|\mathbf{r}_{\mathrm{IRS}}\|$ & IRS center location and the BS-IRS distance. \\
\hline
$\mathbf{F}$ & Near-field BS-IRS channel matrix. \\
\hline
$\mathbf{g}_k$ & IRS-user channel vector for user $k$. \\
\hline
$\mathbf{h}_k$ & Cascaded BS-IRS-user channel vector for user $k$. \\
\hline
$\boldsymbol{\theta},\mathbf{\Theta}$ & IRS phase-shift vector and the corresponding diagonal reflection matrix. \\
\hline
$\mathbf{w}_k,\; p_k$ & MRT precoding vector and power-allocation factor for user $k$, respectively. \\
\hline
$\gamma_k,\; R_{\mathrm{erg}}$ & SINR of user $k$ and ergodic sum-rate, respectively. \\
\hline
$\varepsilon$ & Effective degree of freedom (EDoF) metric of the aggregate channel. \\
\hline
$\rho_{jk}$ & Spatial correlation coefficient between the cascaded channels of users $j$ and $k$. \\
\hline
$g_{jk}$ & Favorable-propagation metric that measures the normalized variance of $\mathbf{h}_j^H\mathbf{h}_k$. \\
\hline
$\Delta_{s,t}$ & Phase increment associated with the relative IRS-element lag pair $(s,t)$. \\
\hline
$\Delta$ & Fundamental phase increment under the symmetric deployment condition $\mu_y=\mu_z$. \\
\hline
$\alpha_k,\;\kappa_k,\;\beta_k$ & Large-scale path gain, Rician factor, and LoS power ratio of user $k$, respectively. \\
\hline
$C_k$ & NLoS contribution term in $\mathbb{E}[\mathbf{h}_k^H\mathbf{h}_k]$. \\
\hline
$C_{k,j}$ & NLoS-NLoS contribution term in $\mathbb{E}[|\mathbf{h}_k^H\mathbf{h}_j|^2]$. \\
\hline
$\mathbf{A}_{k,j}$ & Deterministic LoS coupling matrix appearing in the statistical channel moments. \\
\hline
$\mathbf{D}_k$ & Fourth-order statistical matrix appearing in the interference term. \\
\hline
\end{tabular}
\vspace{-10pt}
\end{table*}
\section{System Model}

As shown in Fig. \ref{fig1}, we consider a MIMO communication system, where a BS serves $K$ users with the assistance of an IRS. The system is modeled in a three-dimensional Cartesian coordinate system, with the BS–IRS link characterized by a near-field channel model and the IRS–user links characterized by a far-field channel model. Furthermore, we assume that the direct BS-user links are obstructed due to practical blockage. The BS is equipped with an S-ULA comprising $M$ antenna elements, where $M$ is assumed to be an odd number for notational convenience. The array is aligned along the $x$-axis and centered at the origin. The position vector of the $i$-th antenna is given by $\mathbf{p}_{\mathrm{BS},i}=[id_{\mathrm{BS}},0,0]^T$, where $i\in \{0,\pm 1,...,\pm (M-1)/2\}$ and the inter-antenna spacing is defined as $d_{\mathrm{BS}}=\zeta _{\mathrm{BS}}\frac{\lambda}{2}
$ with $\lambda$ denoting the carrier wavelength and 
$\zeta _{\mathrm{BS}}$ denoting the array sparsity factor at the BS. The IRS is an S-UPA comprising $N=N_u \times N_v$ reflecting elements, parallel to the $y-z$ plane. The numbers of elements along the $y$- and $z$-axes, $N_u$ and $N_v$, are both assumed to be odd. The geometric center of the IRS is located at $\mathbf{r}_{\mathrm{IRS}}=[l_x,l_y,l_z]^T$ with its distance from the origin being $l=\|\mathbf{r}_{\mathrm{IRS}}\|$. The position vector of the $(u_n, v_n)$-th IRS element is $\mathbf{p}_{\mathrm{IRS},n}=\mathbf{r}_{\mathrm{IRS}}+\boldsymbol{\delta}_{n}$, where $n\in \{1,2,...,N\}
$ and $\boldsymbol{\delta}_{n}=[0,u_n d_{\mathrm{IRS}},v_n d_{\mathrm{IRS}}]^T$ is the local displacement vector. Here, the pair $(u_n, v_n)$ denotes the two-dimensional coordinates corresponding to the $n$-th element, with $\left. u_n=\left. \lfloor \frac{n-1}{N_v} \right. \right. \rfloor -\frac{N_u-1}{2}\in \{0,\pm 1,\dots ,\pm \frac{N_u-1}{2}\}$ and $v_n=((n-1)\mathrm{mod}N_v)-\frac{N_v-1}{2}\in \{0,\pm 1,\dots ,\pm \frac{N_v-1}{2}\}$. The element spacing is defined as $d_{\mathrm{IRS}}=\zeta_{\mathrm{IRS}}\lambda/2$ with $\zeta_{\mathrm{IRS}}$ being the IRS sparsity factor. For analytical tractability, we consider a calibrated BS–IRS system with common timing and frequency references and known deployment geometry. Practical synchronization and geometry mismatches, as well as calibration errors, mutual coupling, and IRS hardware impairments \cite{zz1,zz2,zz3}, mainly introduce phase perturbations and do not fundamentally alter the near-field rank-enhancement mechanism. The channel between BS and IRS array is represented by $\mathbf{F}\in \mathbb{C} ^{M\times N}$ and the channel entry between the $i$-th BS antenna and the $n$-th IRS element is given by
\begin{align}
[\mathbf{F}]_{i,n}=\rho _{\mathrm{T}}e^{-j\frac{2\pi}{\lambda}d_{i,n}},
\end{align}
where $\rho _{\mathrm{T}}$ is the complex-valued reference channel gain and $d_{i,n}=\parallel \mathbf{p}_{\mathrm{IRS},n}-\mathbf{p}_{\mathrm{BS},i}\parallel$ denotes the distance between the $i$-th BS antenna to the $n$-th IRS element that can be further expressed as
\begin{align}
\!d_{i,n}\!=\!\sqrt{(l_x-id_{\mathrm{BS}})^2\!+\!(l_y+u_nd_{\mathrm{IRS}})^2\!+\!(l_z+v_nd_{\mathrm{IRS}})^2}.
\end{align}
\begin{figure}[!t]
    \centering
    \includegraphics[width=0.5\textwidth]{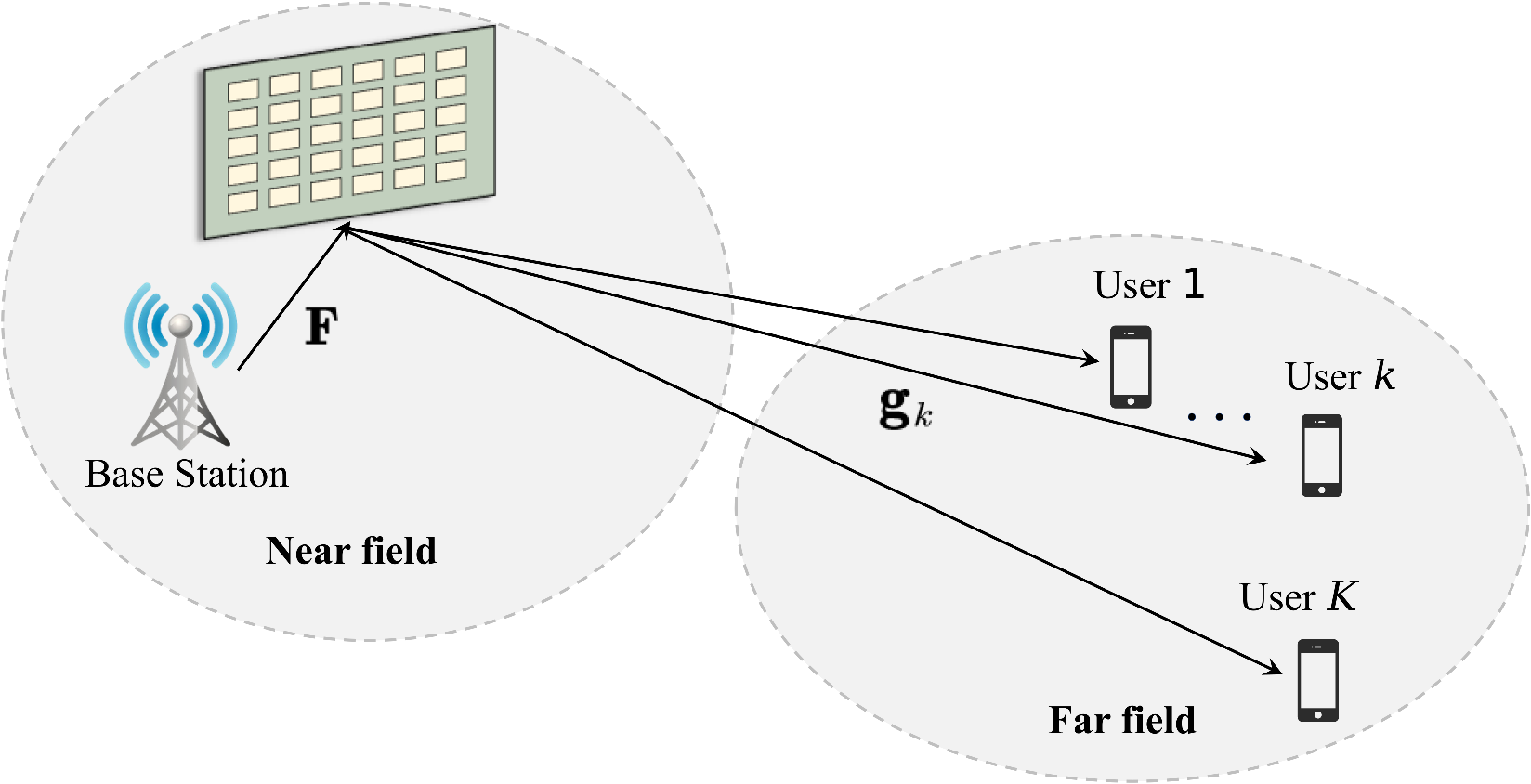}
    \caption{IRS assisted multi-user sparse MIMO communication.}
    \label{fig1}
\end{figure}
Then, the far-field channel from the IRS to the $k$-th user $\mathbf{g}_k\in \mathbb{C} ^{N\times 1}$ can be expressed as
\begin{align}\label{channelIRS2User}
\mathbf{g}_k=\rho _{\mathrm{R},k}\mathbf{a}_{\mathrm{S},k}\left( \Phi _{\mathrm{R},\mathrm{k}}^{\mathrm{AOD}},\Theta _{\mathrm{R},k}^{\mathrm{AOD}} \right),
\end{align}

\begin{figure*}[hb] 	
	\hrulefill
	\centering
	\begin{equation}
		\setlength\abovedisplayskip{2pt}
		\setlength\belowdisplayskip{2pt}
		\small
		\begin{aligned}
	d_{i,n} &= \sqrt{(l_x-id_{\mathrm{BS}})^2+(l_y+u_n d_{\mathrm{IRS}})^2+(l_z+v_n d_{\mathrm{IRS}})^2}  \\
    &= l\sqrt{1+\frac{\lambda}{l^2}(-il_x\zeta_{\mathrm{BS}}+u_n l_y\zeta_{\mathrm{IRS}}+v_n l_z\zeta_{\mathrm{IRS}})+\frac{\lambda^2}{4l^2}((i\zeta_{\mathrm{BS}})^2+(u_n\zeta_{\mathrm{IRS}})^2+(v_n\zeta_{\mathrm{IRS}})^2)}
		\end{aligned}
		\label{distance}
	\end{equation}	
\end{figure*}
where $\rho _{\mathrm{R},k}$ is the complex channel gain, $\Phi _{\mathrm{R},\mathrm{k}}^{\mathrm{AOD}}=\pi \zeta _{\mathrm{IRS}}\cos \theta _{\mathrm{R},k}^{\mathrm{AOD}}
$ and $\Theta _{\mathrm{R},k}^{\mathrm{AOD}}=\pi \zeta _{\mathrm{IRS}}\sin \phi _{\mathrm{R},k}^{\mathrm{AOD}}\sin \theta _{\mathrm{R},k}^{\mathrm{AOD}}$ with $\theta _{\mathrm{R},k}^{\mathrm{AOD}}$ and $\phi _{\mathrm{R},k}^{\mathrm{AOD}}$ denoting the horizontal angle of departure (AoD) and the vertical AoD at the IRS, respectively. Furthermore, the array response vector $\mathbf{a}_L$ of ULA can be unified by 
\begin{align}\label{ULA}
\mathbf{a}_L\left( X \right) =\left[ 1,e^{jX},...,e^{jX\left( L-1 \right)} \right] ^T.
\end{align}
It is worth noting that the array response  $\mathbf{a}_{\mathrm{S},k}$ for UPA can be decomposed into that of ULA as $\mathbf{a}_{\mathrm{S},k}\left( X,Y \right) =\mathbf{a}_{N_u}\left( X \right) \otimes \mathbf{a}_{N_v}\left( Y \right) $, where $\otimes$ is the Kronecker product. 
Moreover, the passive beamforming matrix of the IRS is denoted by $\mathbf{\Theta }=\mathrm{diag}\mathrm{(}e^{j\phi _1},e^{j\phi _2},\cdots ,e^{j\phi _N})$. Thus, the cascaded channel from the BS to the $k$-th user via the IRS is expressed as
\begin{align}
\mathbf{h}_k=\mathbf{F}\mathbf{\Theta g}_k.
\end{align}
Let $\boldsymbol{W}=\left[ \boldsymbol{w}_1,\boldsymbol{w}_2,\cdots,\boldsymbol{w}_K \right] \in \mathbb{C} ^{M\times K}$ denote the linear precoding matrix at the BS, where $\mathbf{w}_k \in \mathbb{C}^{M\times 1}$ is the precoding vector for user $k$. The baseband transmitted signal is given by $\boldsymbol{Ws}$, where $\boldsymbol{s}=\left[ s_1,\cdots,s_K \right] ^T$ is the data vector with each element $s_k$ being an independent variable with zero mean and normalized power. The received signal at user $k$ can be expressed as
\begin{align}
y_k=\boldsymbol{h}_{k}^{H}\boldsymbol{w}_ks_k+\sum_{j=1,j\ne k}^K{\boldsymbol{h}_{k}^{H}\boldsymbol{w}_j}s_j+\mathbf{z}_k,
\end{align}
where $\mathbf {z}_k\sim \mathcal {CN}(0,\sigma_k^2)$ is the additive white Gaussian noise at user $k$ with power $\sigma_k ^{2}$. Accordingly, the signal-to-interference-plus-noise ratio (SINR) at the user $k$ is given by
\begin{align}
\gamma _k=\frac{\left| \boldsymbol{h}_{k}^{H}\boldsymbol{w}_k \right|^2}{\sum_{j=1,j\ne k}^K{\left| \boldsymbol{h}_{k}^{H}\boldsymbol{w}_j \right|^2}+\sigma _{k}^{2}}.
\end{align}
Thus, the achievable sum rate of the system is given by
\begin{align}
R=\sum_{k=1}^K{R_k}\triangleq \sum_{k=1}^K{\log _2\left( 1+\gamma _k \right)}.
\end{align}

\section{IRS Deployment Strategy in Sparse MIMO}
\subsection{Channel Correlation Analysis in far-field}
Consider the scenario where the IRS is located in the far-field of the BS, which requires the distance $l$ to satisfy $l > \frac{2\left( D_{\mathrm{BS}}+D_{\mathrm{IRS}} \right) ^2}{\lambda}$, with $D_{\mathrm{BS}}$ and $D_{\mathrm{IRS}}$ denoting the apertures of the BS and IRS arrays, respectively \cite{r30}. Then, the spherical wavefronts emanating from the transmit antennas can be accurately approximated as planar waves at the receiver. The planar wave assumption is mathematically captured by a first-order Taylor series expansion of the propagation distance:
\begin{align}
   d_{i,n}\approx l-id_{\mathrm{BS}}\mu _x+u_nd_{\mathrm{IRS}}\mu _y+v_nd_{\mathrm{IRS}}\mu _z,
\end{align}
where $\mu_x=l_x/l$, $\mu_y=l_y/l$, and $\mu_z=l_z/l$ are the direction cosines of the IRS's center relative to the BS.

Under this approximation, the channel matrix $\mathbf{F}$ becomes rank-deficient, specifically rank-one. It can be expressed as
\begin{align}
\mathbf{F}\approx \rho _{\mathrm{T}}\mathbf{a}_M\left( \Theta _{\mathrm{T}}^{\mathrm{AOD}} \right) \mathbf{a}_{\mathrm{S}}^{H}\left( \Phi _{\mathrm{T}}^{\mathrm{AOA}},\Theta _{\mathrm{T}}^{\mathrm{AOA}} \right),
\end{align}
where $\Theta_{\mathrm{T}}^{\mathrm{AOD}}$ and $(\Phi_{\mathrm{T}}^{\mathrm{AOA}}, \Theta_{\mathrm{T}}^{\mathrm{AOA}})$ are determined by the AoD at the BS and the angles of arrival (AoA) at the IRS similarly in \eqref{channelIRS2User}, respectively.

To quantify the impact of this rank-one channel $\mathbf{F}$ on multi-user performance, we examine the spatial correlation between the cascaded channels of two arbitrary users, $j$ and $k$, defined by the correlation coefficient:
\begin{align}
\rho _{jk}=\frac{|\mathbf{h}_{j}^{H}\mathbf{h}_k|}{\parallel \mathbf{h}_j\parallel \parallel \mathbf{h}_k\parallel}.
\end{align}
With the IRS located in the far field of the BS, the cascaded channel for user $k$ simplifies to $\mathbf{h}_k\approx \chi _k\mathbf{a}_M(\Theta _{\mathrm{T}}^{\mathrm{AOD}})
$, where $\chi _k=\rho _{\mathrm{T}}\rho _{\mathrm{R},k}\mathbf{a}_{\mathrm{S}}^{H}\left( \Phi _{\mathrm{T}}^{\mathrm{AOA}},\Theta _{\mathrm{T}}^{\mathrm{AOA}} \right) \mathbf{\Theta a}_{\mathrm{S},k}\left( \Phi _{\mathrm{R},\mathrm{k}}^{\mathrm{AOD}},\Theta _{\mathrm{R},k}^{\mathrm{AOD}} \right)$. The correlation coefficient thus becomes: 
\begin{align}
\rho _{jk}\approx \frac{|\chi _{j}^{*}\chi _k\mathbf{a}_{M}^{H}\left( \Theta _{\mathrm{T}}^{\mathrm{AOD}} \right) \mathbf{a}_M\left( \Theta _{\mathrm{T}}^{\mathrm{AOD}} \right) |}{\left| \chi _j\chi _k \right|M}=1.
\end{align}

The result $\rho_{jk} \approx 1$ indicates that the users' channels are perfectly correlated in the far-field LoS scenario. It implies that the users are spatially indistinguishable from the BS's perspective, rendering linear precoding ineffective for multi-user spatial multiplexing. This fundamental limitation severely degrades system performance.

\subsection{Near-Field Channel Modeling and Analysis}
\begin{figure}[t]
\centering
\includegraphics[width=0.5\textwidth]{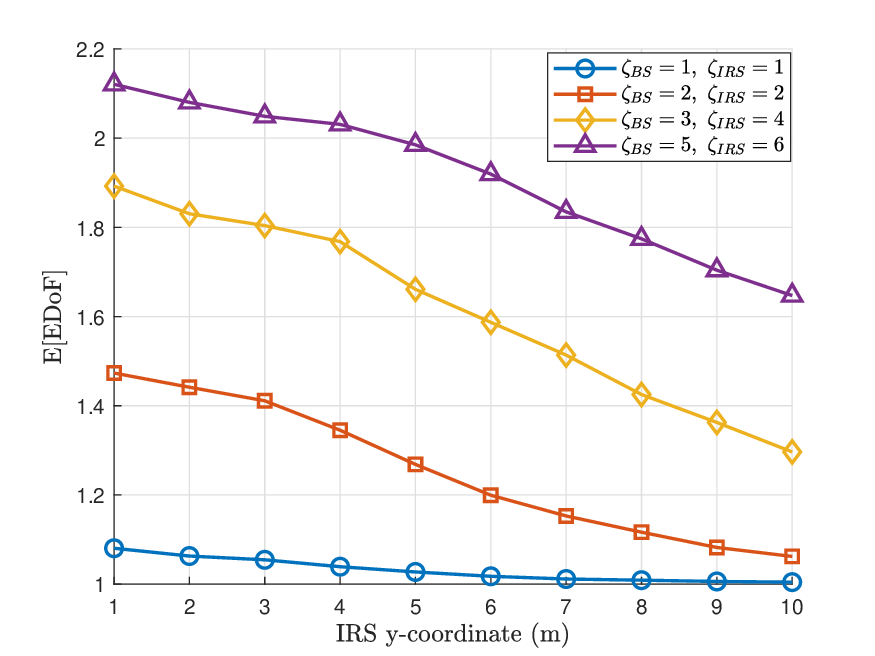}
\caption{$\mathbb{E} \left[ \mathrm{EDoF} \right]$ versus sparsity and IRS position when $N$=1600.}
\label{EDoFv1}
\vspace{-3pt}
\end{figure}
The fundamental limitation of the far-field model is that users’ spatial signatures become perfectly correlated, which renders multi-user multiplexing ineffective. This motivates a transition to near-field analysis, where the spherical nature of the wavefront is explicitly taken into account. In this regime, which is characterized by the condition $l<\frac{2\left( D_{\mathrm{BS}}+D_{\mathrm{IRS}} \right) ^2}{\lambda}$, the path-length variations across the array elements are no longer linear. 

 To accurately capture this wavefront curvature, we employ a Fresnel-type second-order Taylor expansion of the exact distance $d_{i,n}$ in \eqref{distance}, which is widely adopted in near-field spherical-wave modeling for extremely large-scale arrays \cite{lu1,r7}. Thus, $d_{i,n}$ can be approximated as 
\begin{align}\label{dbest}
&d_{i,n} \approx l \!+\! \frac{\lambda}{2}\left( -i\zeta_{\mathrm{BS}}\mu_x \!+\! u_n\zeta_{\mathrm{IRS}}\mu_y \!+\! v_n\zeta_{\mathrm{IRS}}\mu_z \right)\nonumber\\
&+ \!\!\frac{\lambda^2}{8l}\left[ (i\zeta_{\mathrm{BS}})^2(1-\mu_x^2) \!\!+\!\! (u_n\zeta_{\mathrm{IRS}})^2(1-\mu_y^2) \!\!+\!\! (v_n\zeta_{\mathrm{IRS}})^2(1-\mu_z^2) \right]\nonumber\\
&+\!\!\! \frac{\lambda^2}{4l}\left( iu_n\zeta_{\mathrm{BS}}\zeta_{\mathrm{IRS}}\mu_x\mu_y\!\! + \!\!iv_n\zeta_{\mathrm{BS}}\zeta_{\mathrm{IRS}}\mu_x\mu_z \!\!- \!\!u_n v_n\zeta_{\mathrm{IRS}}^2\mu_y\mu_z \right).
\end{align}

To assess the attainable spatial multiplexing gain, we adopt the EDoF as a performance metric. For an aggregate multi-user channel matrix $\mathbf{H}=[\mathbf{h}_1, \dots, \mathbf{h}_K]$, the EDoF is defined as \cite{4418491}
\begin{align}\label{EDoFDefine}
\varepsilon = \left( \frac{\mathrm{tr}(\mathbf{HH}^H)}{\|\mathbf{HH}^H\|_F} \right)^2,
\end{align}
where an EDoF value approaching the number of users $K$ signifies spatially orthogonal channels suitable for multiplexing. To assess the fundamental capabilities of the propagation environment, we evaluate the expected EDoF, $\mathrm{E}[\varepsilon]$, by averaging over numerous realizations of i.i.d. random IRS phase shifts.  A larger $\varepsilon$ indicates a more even eigenvalue distribution and hence a larger effective rank, implying better spatial separability among users. For a given aggregate channel power, it also corresponds to weaker inter-user correlation in the Gram matrix, meaning that the user channels are closer to being orthogonal. Therefore, $\varepsilon$ is a physically meaningful indicator of the spatial multiplexing capability of the cascaded channel. Under MRT, such channel decorrelation tends to reduce inter-user interference and improve the achievable sum-rate \cite{EDOF11}.

The impact of leveraging near-field effects through sparse arrays is illustrated in Fig. \ref{EDoFv1}. The results demonstrate that array sparsity acts as a critical enabler for spatial multiplexing. Specifically, the conventional dense array configuration ($\zeta_{\text{BS}}=1, \zeta_{\text{IRS}}=1$) yields an expected EDoF that remains close to 1, which reflects the rank-deficient nature of the channel in this regime. In contrast, increasing the sparsity factors substantially enhances the EDoF. This improvement stems from the expanded effective apertures, which accentuate the wavefront curvature and generate distinct spatial signatures for users. Furthermore, this gain diminishes as the IRS-BS distance increases, confirming that the high degrees of freedom are intrinsically tied to the near-field propagation conditions.

To rigorously characterize this inherent capability, we analyze the statistical properties of the cascaded channel under the assumption that the IRS phase shifts ${\phi_n}$ are independent and identically distributed (i.i.d.) random variables drawn from the uniform distribution $\mathcal{U}[0,2\pi)$.  This assumption is adopted as an analytical tool to decouple the geometry-induced propagation behavior from the specific IRS beamforming objective. In this way, the resulting analysis captures the intrinsic channel decorrelation and user-separation capability enabled by the proposed near-field sparse deployment, rather than the performance of any particular optimized phase configuration. This favorable channel structure can then be exploited by deterministic IRS phase designs.
\subsubsection{Expected Inter-User Correlation} 

To begin our analysis of the system's multi-user capabilities, we first evaluate the expected correlation between the channels of two distinct users, $j$ and $k$ as follows:
\begin{align}\label{ExpectUser}
\mathbb{E}[\mathbf{h}_{j}^{H}\mathbf{h}_k] = \mathbf{g}_{j}^{H} \mathbb{E}[\mathbf{\Theta}^H\mathbf{F}^H\mathbf{F}\mathbf{\Theta}] \mathbf{g}_k.    
\end{align}
Due to the statistical independence of the phase shifts, the expectation of the matrix product simplifies significantly. Specifically, for $n \neq n'$, the off-diagonal elements are zero since $\mathbb{E}[e^{-j\phi_n}e^{j\phi_{n'}}] = \mathbb{E}[e^{-j\phi_n}]\mathbb{E}[e^{j\phi_{n'}}] = 0$. The diagonal elements ($n=n'$) evaluate to one, i.e., $\mathbb{E}[e^{-j\phi_n}e^{j\phi_n}] = 1$. Consequently, the expected matrix becomes
\begin{align}
\mathbb{E}[\mathbf{\Theta}^H\mathbf{F}^H\mathbf{F}\mathbf{\Theta}] = \mathrm{diag}\left([\mathbf{F}^H\mathbf{F}]_{1,1}, \dots, [\mathbf{F}^H\mathbf{F}]_{N,N}\right).
\end{align}
Based on this, the expected inner product in \eqref{ExpectUser} thus reduces to
\begin{align}
\!\!\mathbb{E}[\mathbf{h}_{j}^{H}\mathbf{h}_k] \!\!=\!\!\sum_{n=1}^{N} [\mathbf{g}_j]_n^* [\mathbf{g}_k]_n \sum_{i=1}^{M} |[\mathbf{F}]_{i,n}|^2 \!\!=\! \!M|\rho_T|^2 (\mathbf{g}_j^H \mathbf{g}_k).
\end{align}
As the users are located in the far-field of the IRS, $\mathbf{g}_j$ and $\mathbf{g}_k$ are array response vectors corresponding to distinct spatial angles. For an IRS with a large aperture and sufficient angular separation between users, these vectors are nearly orthogonal. Consequently, the magnitude of their inner product $|\mathbf{g}_j^H \mathbf{g}_k|$ is significantly suppressed, making the expected correlation $\mathbb{E}[\mathbf{h}_{j}^{H}\mathbf{h}_k]$ negligible for $j \neq k$. This demonstrates that the user channels are, on average, weakly correlated, a foundational property for enabling multi-user communication.

\subsubsection{Favorable Propagation Analysis}

For robust multi-user communication, a powerful property is favorable propagation, which ensures that as the number of antennas grows the channel vectors of different users become deterministically orthogonal \cite{10769492}. Favorable propagation is formally achieved if the normalized variance of the channel inner product vanishes:
\begin{align}
g_{jk} \triangleq \frac{\mathrm{var}\left\{ \mathbf{h}_{j}^{H}\mathbf{h}_k \right\}}{\mathbb{E} \left[ \|\mathbf{h}_j\|^2 \right] \mathbb{E} \left[ \|\mathbf{h}_k\|^2 \right]} \to 0, \quad \text{for } j\neq k.    
\end{align}

The magnitude of $g_{jk}$ quantifies the severity of inter-user interference (IUI), with a vanishing $g_{jk}$ indicating asymptotic orthogonality. Since the behavior of $g_{jk}$ is governed by the variance term $\text{var}\{\mathbf{h}_{j}^{H}\mathbf{h}_{k}\}$, we derive its closed form in the following lemma by leveraging the random IRS phase shifts.

\begin{Lemma}
Assuming independent and uniformly distributed random IRS phase shifts, the variance of the inner product between the cascaded channels of distinct users $j$ and $k$ is given by
\begin{align}\label{var}
\!\!\!\!\!\!\mathrm{var}\{\mathbf{h}_{j}^{H}\mathbf{h}_k\}
\!\!= \!\!\sum_{n_1=1}^N \!\sum_{\substack{n_2=1,\\ n_2\ne n_1}}^N\!|[\mathbf{g}_j]_{n_1}|^2 |[\mathbf{g}_k]_{n_2}|^2
|[\mathbf{F}^H\mathbf{F}]_{n_1,n_2}|^2.
\end{align}

\begin{proof}
Please refer to Appendix A.
\end{proof}
\end{Lemma}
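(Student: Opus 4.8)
The plan is to expand the inner product $\mathbf{h}_j^H \mathbf{h}_k = \mathbf{g}_j^H \mathbf{\Theta}^H \mathbf{F}^H \mathbf{F} \mathbf{\Theta} \mathbf{g}_k$ explicitly as a double sum over IRS elements and then compute its variance directly via $\mathrm{var}\{Z\} = \mathbb{E}[|Z|^2] - |\mathbb{E}[Z]|^2$. Writing $[\mathbf{\Theta}]_{n,n} = e^{j\phi_n}$, I would first express
\begin{align}
\mathbf{h}_j^H \mathbf{h}_k = \sum_{n_1=1}^N \sum_{n_2=1}^N [\mathbf{g}_j]_{n_1}^* [\mathbf{g}_k]_{n_2} e^{-j\phi_{n_1}} e^{j\phi_{n_2}} [\mathbf{F}^H\mathbf{F}]_{n_1,n_2}.
\end{align}
The term $e^{-j\phi_{n_1}}e^{j\phi_{n_2}}$ equals $1$ when $n_1=n_2$ and is a zero-mean unit-modulus random variable otherwise, so the diagonal part is deterministic and contributes only to the mean (this is exactly the $M|\rho_T|^2(\mathbf{g}_j^H\mathbf{g}_k)$ term already derived), while the off-diagonal part is the only source of randomness.

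Next I would compute $\mathbb{E}[|\mathbf{h}_j^H \mathbf{h}_k|^2]$ by multiplying the double sum by its conjugate, producing a quadruple sum over indices $n_1, n_2, n_1', n_2'$ with phase factor $\mathbb{E}[e^{-j\phi_{n_1}}e^{j\phi_{n_2}}e^{j\phi_{n_1'}}e^{-j\phi_{n_2'}}]$. Because the $\phi_n$ are i.i.d.\ uniform on $[0,2\pi)$, this expectation is nonzero only when the phases cancel in conjugate pairs; the surviving index patterns are $\{n_1 = n_2,\ n_1' = n_2'\}$ and $\{n_1 = n_2',\ n_2 = n_1'\}$, with the fully-diagonal case $n_1=n_2=n_1'=n_2'$ counted in both and needing to be subtracted once. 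The first pattern reconstructs $|\mathbb{E}[\mathbf{h}_j^H\mathbf{h}_k]|^2$, which is then removed when forming the variance; the second pattern, restricted to $n_1 \neq n_2$ (equivalently $n_1' \neq n_2'$), yields precisely $\sum_{n_1}\sum_{n_2 \neq n_1} |[\mathbf{g}_j]_{n_1}|^2 |[\mathbf{g}_k]_{n_2}|^2 |[\mathbf{F}^H\mathbf{F}]_{n_1,n_2}|^2$, using $[\mathbf{F}^H\mathbf{F}]_{n_2,n_1} = [\mathbf{F}^H\mathbf{F}]_{n_1,n_2}^*$ so the two cross factors combine into $|[\mathbf{F}^H\mathbf{F}]_{n_1,n_2}|^2$. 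Careful bookkeeping shows the leftover diagonal overlap terms cancel exactly against pieces of $|\mathbb{E}[\mathbf{h}_j^H\mathbf{h}_k]|^2$, leaving the stated expression.

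The main obstacle I anticipate is the combinatorial bookkeeping of the fourth-moment sum: one must enumerate which of the possible pairings of the four phase indices survive under the i.i.d.\ uniform assumption, handle the coincidence cases where several of the equalities hold simultaneously (so that terms are not double-counted), and verify that all diagonal contributions other than the claimed off-diagonal sum are precisely cancelled by $-|\mathbb{E}[\mathbf{h}_j^H\mathbf{h}_k]|^2$. A clean way to organize this is to split $\mathbf{h}_j^H\mathbf{h}_k$ from the outset into its deterministic diagonal part $D \triangleq M|\rho_T|^2(\mathbf{g}_j^H\mathbf{g}_k)$ and its zero-mean off-diagonal part $X$, so that $\mathrm{var}\{\mathbf{h}_j^H\mathbf{h}_k\} = \mathrm{var}\{X\} = \mathbb{E}[|X|^2]$, and then only the second index-pattern survives in $\mathbb{E}[|X|^2]$, directly giving \eqref{var} without any cancellation argument. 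I would present the proof along these lines in Appendix A.
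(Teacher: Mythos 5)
Your overall strategy is the same as the paper's Appendix~A: expand $\mathbf{h}_j^H\mathbf{h}_k$ as a double sum with coefficients $T_{n_1,n_2}\triangleq[\mathbf{g}_j]_{n_1}^*[\mathbf{F}^H\mathbf{F}]_{n_1,n_2}[\mathbf{g}_k]_{n_2}$ and phase factors $e^{-j\phi_{n_1}}e^{j\phi_{n_2}}$, form the fourth-order phase moment, and keep only the index pairings that survive the i.i.d.\ uniform expectation. However, you have mis-identified the second surviving pairing, and that identification is precisely where the computation lives. With your conventions the relevant expectation is $\mathbb{E}[e^{-j\phi_{n_1}}e^{j\phi_{n_2}}e^{j\phi_{n_1'}}e^{-j\phi_{n_2'}}]$, so each index carrying a negative exponent ($n_1$ and $n_2'$) must be matched with one carrying a positive exponent ($n_2$ and $n_1'$). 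The only two such matchings are $\{n_1=n_2,\ n_1'=n_2'\}$ and $\{n_1=n_1',\ n_2=n_2'\}$. The pairing you list, $\{n_1=n_2',\ n_2=n_1'\}$, leaves a residual exponent $-2\phi_{n_1}+2\phi_{n_2}$ whose expectation vanishes for $n_1\neq n_2$, so it contributes nothing; and even if it were retained, the associated product $T_{n_1,n_2}T_{n_2,n_1}^*$ equals $[\mathbf{g}_j]_{n_1}^*[\mathbf{g}_j]_{n_2}[\mathbf{g}_k]_{n_1}^*[\mathbf{g}_k]_{n_2}\bigl([\mathbf{F}^H\mathbf{F}]_{n_1,n_2}\bigr)^2$ after using Hermitian symmetry, which is not the claimed summand. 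The fact that you needed to invoke $[\mathbf{F}^H\mathbf{F}]_{n_2,n_1}=[\mathbf{F}^H\mathbf{F}]_{n_1,n_2}^*$ at all is the symptom of the wrong pairing: with the correct self-pairing $n_1=n_1'$, $n_2=n_2'$ the surviving term is simply $T_{n_1,n_2}T_{n_1,n_2}^*=|T_{n_1,n_2}|^2$, and the stated summand $|[\mathbf{g}_j]_{n_1}|^2|[\mathbf{g}_k]_{n_2}|^2|[\mathbf{F}^H\mathbf{F}]_{n_1,n_2}|^2$ follows immediately.

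Once this is corrected, your argument coincides with the paper's: the pairing $\{n_1=n_2,\ n_1'=n_2'\}$ reproduces $|\mathbb{E}[\mathbf{h}_j^H\mathbf{h}_k]|^2$, the self-pairing gives $\sum_{n_1,n_2}|T_{n_1,n_2}|^2$, and the quadruple-coincidence overlap is subtracted once by inclusion--exclusion, leaving exactly the off-diagonal sum in the Lemma. Your proposed reorganization --- splitting $\mathbf{h}_j^H\mathbf{h}_k$ into the deterministic diagonal part $D$ and the zero-mean off-diagonal part $X$ so that $\mathrm{var}\{\mathbf{h}_j^H\mathbf{h}_k\}=\mathbb{E}[|X|^2]$ --- is sound and does remove the cancellation bookkeeping, but it still requires the same (corrected) pairing analysis inside $\mathbb{E}[|X|^2]$, so fix the pairing before writing it up.
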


\begin{Theorem}
Under the near-field spherical-wavefront model, the favorable propagation metric $g_{jk}$ of the sparse MIMO system can be approximated as
\begin{align}\label{gjk}
g_{jk}\approx\frac{1}{M^{2}N^{2}}
\sum_{\substack{s=-(N_{u}-1)\\}}^{N_{u}-1}
\sum_{\substack{t=-(N_{v}-1)\\(s,t)\neq(0,0)}}^{N_{v}-1}
w(s,t)f_{M}(\Delta_{s,t}),
\end{align}
where $w(s,t)=(N_{u}-|s|)(N_{v}-|t|)$ and $f_{M}(x)=(\frac{\sin(Mx/2)}{\sin(x/2)})^{2}$ denotes the squared Dirichlet kernel of order $M$. The corresponding phase increment $\Delta_{s,t}$ is given by
\begin{align}\label{deltaNew}
\Delta_{s,t}\triangleq s\left(\frac{\pi\lambda}{2l}\zeta_{\text{BS}}\zeta_{\text{IRS}}\mu_{x}\mu_{y}\right)+t\left(\frac{\pi\lambda}{2l}\zeta_{\text{BS}}\zeta_{\text{IRS}}\mu_{x}\mu_{z}\right). 
\end{align}
\begin{proof}
From Lemma 1, $g_{jk}$ is determined by the squared modulus of the Gram-matrix element:
\begin{align}
|[\mathbf{F}^H\mathbf{F}]_{n_1,n_2}|^2=|\rho _T|^4\left| \sum_{i=-\frac{M-1}{2}}^{\frac{M-1}{2}}{e^{-j\frac{2\pi}{\lambda}(d_{i,n_1}-d_{i,n_2})}} \right|^2.
\end{align}
Substituting the second-order Taylor expansion of $d_{i,n}$ in \eqref{dbest} into the phase and taking the difference $d_{i,n_{1}}-d_{i,n_{2}}$, the terms depending only on $i$ cancel out, and the phase can be written as
\begin{align}
\frac{2\pi}{\lambda}(d_{i,n_{1}}-d_{i,n_{2}})\approx i\Delta_{n_{1},n_{2}}+\Phi_{n_{1},n_{2}}.    
\end{align}
The coefficient $\Delta_{n_{1},n_{2}}$ follows from the cross-term in \eqref{dbest}:
\begin{align}
\!\!\!\!\!\!\Delta _{n_1,n_2}\!=\!\frac{\pi \lambda}{2l}\zeta _{\mathrm{BS}}\zeta _{\mathrm{IRS}}\mu _x[\mu _y(u_{n_1}-u_{n_2})\!+\!\mu _z(v_{n_1}-v_{n_2})],   
\end{align}
which reduces to $\Delta_{s,t}$ as defined in the theorem by letting $s=u_{n_{1}}-u_{n_{2}}$ and $t=v_{n_{1}}-v_{n_{2}}$. The term $\Phi_{n_{1},n_{2}}$ collects all terms independent of $i$ from \eqref{dbest}:
\begin{align}
&\Phi_{n_{1},n_{2}}=\pi\zeta_{\text{IRS}}(\mu_{y}s+\mu_{z}t)+\frac{\pi\lambda}{4l}\zeta_{\text{IRS}}^{2}[(u_{n_{1}}^{2}-u_{n_{2}}^{2})(1-\mu_{y}^{2})\nonumber\\
&+(v_{n_{1}}^{2}-v_{n_{2}}^{2})(1-\mu_{z}^{2})-2\mu_{y}\mu_{z}(u_{n_{1}}v_{n_{1}}-u_{n_{2}}v_{n_{2}})].    
\end{align}
Substituting this into the summation, the factor independent of $i$ can be pulled out and disappears under the modulus:
\begin{align}
&\left| e^{-j\Phi _{n_1,n_2}}\sum_{i=-\frac{M-1}{2}}^{\frac{M-1}{2}}{e^{-ji\Delta _{s,t}}} \right|^2=\left| \frac{\sin\mathrm{(}M\Delta _{s,t}/2)}{\sin\mathrm{(}\Delta _{s,t}/2)} \right|^2\nonumber\\
&=f_M(\Delta _{s,t}).
\end{align}
Finally, regrouping the sum over all distinct IRS element pairs $(n_1, n_2)$ by their relative spatial lags $(s, t)$, and noting that the case $(s,t)=(0,0)$ corresponds to $n_1=n_2$ and is excluded by Lemma 1, yields the expression in the theorem with multiplicity weight $w(s,t)$. 
\end{proof}
\end{Theorem}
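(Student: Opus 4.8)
The plan is to begin from the exact variance expression of Lemma~1, normalize it by the expected channel energies, and collapse the resulting double sum over IRS elements into a double sum of squared Dirichlet kernels indexed by the spatial lags $(s,t)$. First I would evaluate the denominator $\mathbb{E}[\|\mathbf{h}_j\|^{2}]\,\mathbb{E}[\|\mathbf{h}_k\|^{2}]$. Writing $\|\mathbf{h}_k\|^{2}=\mathbf{g}_{k}^{H}\mathbf{\Theta}^{H}\mathbf{F}^{H}\mathbf{F}\mathbf{\Theta}\mathbf{g}_k$ and reusing the independence argument already applied to $\mathbb{E}[\mathbf{\Theta}^{H}\mathbf{F}^{H}\mathbf{F}\mathbf{\Theta}]$, the expectation is diagonal with entries $[\mathbf{F}^{H}\mathbf{F}]_{n,n}=\sum_{i}|[\mathbf{F}]_{i,n}|^{2}=M|\rho_{\mathrm{T}}|^{2}$. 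Since $\mathbf{g}_k=\rho_{\mathrm{R},k}\mathbf{a}_{\mathrm{S},k}(\cdot)$ is a constant-modulus array response, $|[\mathbf{g}_k]_n|^{2}=|\rho_{\mathrm{R},k}|^{2}$ for every $n$, giving $\mathbb{E}[\|\mathbf{h}_k\|^{2}]=MN|\rho_{\mathrm{T}}|^{2}|\rho_{\mathrm{R},k}|^{2}$. The same constant-modulus fact lets me pull $|\rho_{\mathrm{R},j}|^{2}|\rho_{\mathrm{R},k}|^{2}$ out of the Lemma~1 sum, so it cancels against the denominator and leaves
\[
g_{jk}=\frac{1}{M^{2}N^{2}|\rho_{\mathrm{T}}|^{4}}\sum_{n_{1}=1}^{N}\sum_{\substack{n_{2}=1\\ n_{2}\ne n_{1}}}^{N}\bigl|[\mathbf{F}^{H}\mathbf{F}]_{n_{1},n_{2}}\bigr|^{2}.
\]

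\textbf{Reducing each Gram entry to a Dirichlet kernel.} Next I would substitute $[\mathbf{F}^{H}\mathbf{F}]_{n_{1},n_{2}}=|\rho_{\mathrm{T}}|^{2}\sum_{i}e^{-j\frac{2\pi}{\lambda}(d_{i,n_{1}}-d_{i,n_{2}})}$ and insert the second-order expansion \eqref{dbest}. In the difference $d_{i,n_{1}}-d_{i,n_{2}}$ every term depending only on $i$ (the linear term $-\tfrac{\lambda}{2}i\zeta_{\mathrm{BS}}\mu_x$ and the quadratic term $\tfrac{\lambda^{2}}{8l}(i\zeta_{\mathrm{BS}})^{2}(1-\mu_x^{2})$) is common to $n_{1}$ and $n_{2}$ and cancels exactly; the only surviving $i$-dependent contribution is the bilinear BS--IRS coupling $iu_n\zeta_{\mathrm{BS}}\zeta_{\mathrm{IRS}}\mu_x\mu_y+iv_n\zeta_{\mathrm{BS}}\zeta_{\mathrm{IRS}}\mu_x\mu_z$, which yields a term linear in $i$ whose coefficient depends on the pair only through $s=u_{n_{1}}-u_{n_{2}}$ and $t=v_{n_{1}}-v_{n_{2}}$ and equals $\Delta_{s,t}$ of \eqref{deltaNew}. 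Collecting all $i$-independent terms into a phase $\Phi$ with $|e^{-j\Phi}|=1$, the inner sum becomes the symmetric geometric series $\sum_{i=-(M-1)/2}^{(M-1)/2}e^{-ji\Delta_{s,t}}=\sin(M\Delta_{s,t}/2)/\sin(\Delta_{s,t}/2)$, so $|[\mathbf{F}^{H}\mathbf{F}]_{n_{1},n_{2}}|^{2}=|\rho_{\mathrm{T}}|^{4}f_{M}(\Delta_{s,t})$.

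\textbf{Regrouping by spatial lag.} Since each squared Gram entry now depends on $(n_{1},n_{2})$ only through $(s,t)$, I would reindex the double sum by the displacement $(s,t)$. In an $N_u\times N_v$ element grid the number of ordered element pairs with displacement $(s,t)$ is $(N_u-|s|)(N_v-|t|)=w(s,t)$, so the sum collapses to $\sum_{s,t}w(s,t)f_{M}(\Delta_{s,t})$ over $|s|\le N_u-1$, $|t|\le N_v-1$. The excluded diagonal $n_{1}=n_{2}$ corresponds to the single lag $(s,t)=(0,0)$, contributing $w(0,0)f_{M}(0)=NM^{2}$, i.e.\ an $O(1/N)$ term after dividing by $M^{2}N^{2}$; absorbing it into the summation range so that the origin is included yields exactly \eqref{gjk}.

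\textbf{Anticipated obstacle.} The delicate step is the bookkeeping in the second stage: one must confirm that after subtraction no residual $i^{2}$ phase survives (so the inner sum is a genuine geometric series rather than a quadratic Gauss-type sum) and that the leftover $i$-linear coefficient is a function of $(s,t)$ alone, reducing cleanly to $\Delta_{s,t}$ --- in other words, that the near-field curvature influences the variance solely through the cross-products $\mu_x\mu_y$ and $\mu_x\mu_z$. Once this separation is verified, the normalization, the geometric-series identity, and the triangular lag multiplicity $w(s,t)$ are all routine.
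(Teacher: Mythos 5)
Your proposal is correct and follows essentially the same route as the paper: invoke Lemma~1, cancel the $i$-only terms in the second-order expansion so each Gram entry collapses to the Dirichlet kernel $f_M(\Delta_{s,t})$, and regroup the pair sum by spatial lag with multiplicity $w(s,t)$. You additionally make explicit two details the paper leaves implicit --- the $M^2N^2|\rho_{\mathrm{T}}|^4|\rho_{\mathrm{R},j}|^2|\rho_{\mathrm{R},k}|^2$ normalization from the denominator and the $O(1/N)$ discrepancy from absorbing the excluded diagonal lag $(s,t)=(0,0)$ into the sum --- both of which are accurate and strengthen the argument.
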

Theorem 1 establishes a direct link between the statistical user correlation and the BS array factor under random phasing. Unlike the constant correlation observed in far-field LoS channels, the appearance of the Dirichlet kernel $f_{M}(\cdot)$ implies that the random interference floor is structured and governed by the phase difference $\Delta_{s,t}$. This reveals that the inter-user interference is not an immutable environmental property, but a manageable variable determined by how the sparse array geometry samples the near-field wavefront.

\subsection{Design Principles for Channel Decorrelation }

The metric $g_{jk}$ in \eqref{gjk}, governed by the Dirichlet kernel, captures the complex interplay between the system's physical parameters. To distill fundamental design principles from this expression, we first analyze its asymptotic behavior. This reveals the distinct and crucial roles of the BS and IRS scales in enabling multi-user spatial multiplexing, leading to our first key proposition.
\begin{figure}[t]
\centering
\includegraphics[width=0.5\textwidth]{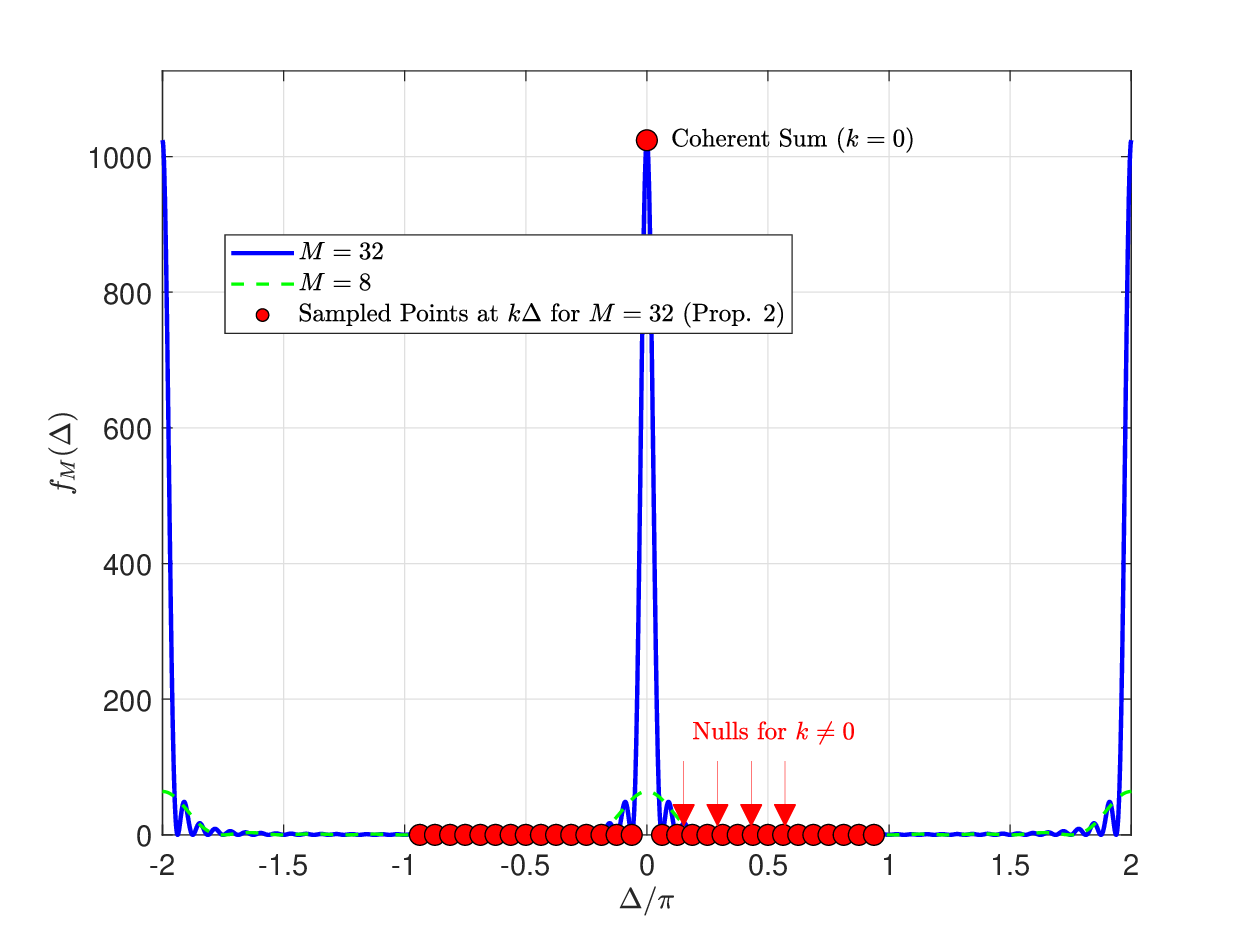}
\caption{Dirichlet kernel $f_M(\Delta_{s,t})$ and sampled nulls.}
\label{gjkFig}
\vspace{-3pt}
\end{figure}
\begin{pos}
For a system with fixed BS antennas $M$ and random IRS phase shifts, $g_{jk} \to 0$ cannot be achieved by solely increasing the IRS size $N$. In the asymptotic limit $N \to \infty$, the metric converges to a non-zero floor determined by $M$:
\begin{align}
 \lim_{N \to \infty} g_{jk} = \mathcal{C}(M) > 0,   
\end{align}
where $\mathcal{C}(M)$ is a constant independent of $N$.
\end{pos}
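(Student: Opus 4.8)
The plan is to read off the $N\to\infty$ asymptotics directly from the closed form \eqref{gjk} in Theorem~1 by passing to the Fourier domain. The crucial structural fact is that the phase increment $\Delta_{s,t}=s\alpha+t\beta$ is \emph{separable} in the lags, with $\alpha\triangleq\tfrac{\pi\lambda}{2l}\zeta_{\mathrm{BS}}\zeta_{\mathrm{IRS}}\mu_x\mu_y$ and $\beta\triangleq\tfrac{\pi\lambda}{2l}\zeta_{\mathrm{BS}}\zeta_{\mathrm{IRS}}\mu_x\mu_z$ being fixed (not scaled with $N$). I would substitute the Fourier expansion of the squared Dirichlet kernel, $f_M(x)=\sum_{r=-(M-1)}^{M-1}(M-|r|)e^{jrx}$, into \eqref{gjk}; since $w(s,t)=(N_u-|s|)(N_v-|t|)$ also factors and $\sum_{s=-(N_u-1)}^{N_u-1}(N_u-|s|)e^{jrs\alpha}=|\sum_{m=0}^{N_u-1}e^{jrm\alpha}|^{2}$, the double lag-sum decouples into a product of two non-negative Fejér-type sums, giving
\begin{align}
g_{jk}=\frac{1}{M^{2}N^{2}}\sum_{r=-(M-1)}^{M-1}(M-|r|)\,\Bigl|\sum_{m=0}^{N_u-1}e^{jrm\alpha}\Bigr|^{2}\Bigl|\sum_{p=0}^{N_v-1}e^{jrp\beta}\Bigr|^{2}.
\end{align}
(The discrepancy between this and the exact variance of Lemma~1, namely the missing $(s,t)=(0,0)$ diagonal, contributes $Mf_M(0)/(M^{2}N)=1/N$, which vanishes in the limit.)

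From this representation the $r=0$ term equals exactly $M\cdot N_u^{2}\cdot N_v^{2}=MN^{2}$ and hence contributes precisely $1/M$ to $g_{jk}$, independent of $N$; since every summand is non-negative this also gives the a priori bound $g_{jk}\ge 1/M>0$. It then remains to show the $r\ne 0$ terms are negligible after the $1/N^{2}$ normalization. For a generic deployment geometry — i.e.\ when $r\alpha\not\equiv0\pmod{2\pi}$ for every $1\le|r|\le M-1$ (equivalently, the sparsity-scaled direction cosines are not small-denominator rationals) — I would bound $\bigl|\sum_{m=0}^{N_u-1}e^{jrm\alpha}\bigr|^{2}=\sin^{2}(N_u r\alpha/2)/\sin^{2}(r\alpha/2)\le\csc^{2}(r\alpha/2)$, a constant in $N$, so each such term is $O(N_v^{2})$ and its contribution to $g_{jk}$ is $O(1/N_u^{2})$. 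Summing over the finitely many harmonics $|r|\le M-1$ then yields $\lim_{N\to\infty}g_{jk}=1/M=:\mathcal{C}(M)>0$, manifestly independent of $N$.

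The one delicate point — and the main obstacle — is the set of \emph{degenerate} geometries in which some nonzero harmonic resonates, i.e.\ $r\alpha\equiv0$ (and possibly $r\beta\equiv0$) $\pmod{2\pi}$ for some $1\le|r|\le M-1$: then the corresponding Fejér sum equals $N_u^{2}$ rather than being $O(1)$, so these finitely many $r$ must be handled separately. If only one of $r\alpha,r\beta$ resonates the term still decays after normalization, while if both resonate it survives and adds the positive, $N$-independent quantity $(M-|r|)/M^{2}$. I would package this by defining $\mathcal{R}_0=\{r:|r|\le M-1,\ r\alpha\equiv r\beta\equiv0\ (\mathrm{mod}\ 2\pi)\}$, which always contains $0$, and concluding $\lim_{N\to\infty}g_{jk}=\tfrac{1}{M^{2}}\sum_{r\in\mathcal{R}_0}(M-|r|)\ge\tfrac1M>0$, which reduces to $\mathcal{C}(M)=1/M$ in the generic case. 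Throughout, $N\to\infty$ is understood as $N_u,N_v\to\infty$ jointly, the relevant regime for scaling the S-UPA; I would add a short remark that growing a single array dimension leaves the argument unchanged (the $r=0$ term still gives $MN^{2}$) except in the further degenerate case where a resonant harmonic meets the fixed dimension, which again only raises the floor.
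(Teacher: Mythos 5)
Your proposal is correct, but it proves the statement by a genuinely different and substantially sharper route than the paper. The paper's own proof is an order-of-growth argument: it notes that the denominator scales as $O(N^2)$, approximates the lag sum in \eqref{gjk} by a Riemann integral, and argues that since $f_M(\cdot)\ge 0$ with $N$-independent main-lobe energy the numerator is also $\Theta(N^2)$, so the ratio tends to some positive constant $\mathcal{C}(M)$ whose value is never identified. You instead expand $f_M(x)=\sum_{r=-(M-1)}^{M-1}(M-|r|)e^{jrx}$, exploit the separability $\Delta_{s,t}=s\alpha+t\beta$ together with the Fej\'er identity $\sum_{s}(N_u-|s|)e^{jrs\alpha}=\bigl|\sum_{m=0}^{N_u-1}e^{jrm\alpha}\bigr|^2$ to decouple the double lag sum into a finite sum of non-negative products, and then read off the limit harmonic by harmonic. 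This buys three things the paper does not deliver: the explicit value $\mathcal{C}(M)=\tfrac{1}{M^2}\sum_{r\in\mathcal{R}_0}(M-|r|)$, equal to $1/M$ generically (nicely consistent with the $2/M$ bound of Corollary~1); the non-asymptotic lower bound $g_{jk}\ge 1/M$ from the $r=0$ term alone, which already establishes the claimed floor without any limiting argument; and an exact characterization of the degenerate resonant geometries ($r\alpha\equiv r\beta\equiv 0 \pmod{2\pi}$) that the paper's Riemann-sum heuristic silently glosses over. Two minor bookkeeping points: the discarded $(s,t)=(0,0)$ diagonal contributes $w(0,0)f_M(0)/(M^2N^2)=1/N$, not $Mf_M(0)/(M^2N)$ as written (same conclusion, it vanishes); and you should state explicitly that $\alpha,\beta$ are held fixed as $N\to\infty$, i.e.\ the deployment geometry and sparsity factors do not scale with the IRS size, which is the implicit premise of the proposition.
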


\begin{proof}
The behavior of $g_{jk}$ is governed by the asymptotic scaling of its numerator and denominator. The denominator, representing the product of expected channel powers, scales quadratically with the IRS size, i.e., it is of order $O(N^2)$. For the numerator in \eqref{gjk}, as $N \to \infty$, the discrete summation over the relative indices $(s,t)$ can be approximated by a Riemann integral over the normalized aperture domain. Since the Dirichlet kernel $f_M(\cdot)$ is non-negative and its main lobe energy is independent of $N$, the weighted summation also exhibits an asymptotic growth rate of $O(N^2)$. Consequently, the ratio defining $g_{jk}$ converges to a strictly positive constant $\mathcal{C}(M)$.
\end{proof}
Proposition 1 indicates that channel decorrelation is primarily enabled by a large BS aperture. This clarifies the distinct yet complementary roles of the BS and the IRS. The number of BS antennas $M$ governs spatial separability, whereas the number of IRS elements $N$ mainly contributes passive array gain to the link budget. This observation naturally leads to the question of whether we can move from merely exploiting favorable statistics to deterministically engineering decorrelated channels.

The structure of the Dirichlet kernel in \eqref{gjk} suggests that such deterministic engineering is indeed possible. As visualized in Fig. \ref{gjkFig}, the kernel exhibits a series of deep, periodic nulls. This deterministic structure presents an opportunity for proactive channel engineering: if the system geometry can be designed to place the inter-element phase differences $\Delta_{s,t}$ precisely at these nulls, the cross-correlation terms in \eqref{gjk} can be systematically eliminated. Leveraging this ability to exploit the array factor's nulls through strategic geometric design, we establish the deployment criterion in the following proposition.

\begin{pos}
Consider a symmetric IRS deployment satisfying $\mu_y = \mu_z$. Let the fundamental phase increment be defined as $\Delta \triangleq \frac{\pi\lambda}{2l}\zeta_{\text{BS}}\zeta_{\text{IRS}}\mu_x\mu_y$. If the system is designed such that 
\begin{align}
    \Delta =\frac{q\pi}{M},
\end{align}
where $q$ is an even integer not a multiple of $M$ and $\mathrm{gcd}\left( \frac{q}{2},M \right) =1$ with $M > N_{u} + N_{v} - 2$, then the metric $g_{jk}$ simplifies to
\begin{align}\label{progjk}
g_{jk} = \frac{1}{(N_uN_v)^2} \sum_{\substack{s=-(N_{\min}-1)\\ s\neq 0}}^{N_{\min}-1} (N_u-|s|)(N_v-|s|),    
\end{align}
where $N_{\min} = \min(N_u, N_v)$.
\end{pos}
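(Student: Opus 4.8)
The plan is to start from the closed form \eqref{gjk} in Theorem~1 and show that, under the prescribed geometric tuning, the double sum over spatial lags $(s,t)$ collapses onto its anti-diagonal $t=-s$. First I would invoke the symmetry hypothesis $\mu_y=\mu_z$, which by \eqref{deltaNew} makes the phase increment depend only on the combined lag: $\Delta_{s,t}=(s+t)\Delta$ with $\Delta=\frac{\pi\lambda}{2l}\zeta_{\text{BS}}\zeta_{\text{IRS}}\mu_x\mu_y$. Substituting the design condition $\Delta=q\pi/M$ then gives $\Delta_{s,t}=(s+t)q\pi/M = 2\pi k_{s,t}/M$ with $k_{s,t}\triangleq (s+t)q/2\in\mathbb{Z}$, where integrality is exactly where the hypothesis that $q$ is even is consumed.

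Next I would evaluate the Dirichlet kernel $f_M(\Delta_{s,t})$ at these sampled arguments using its two elementary properties: $f_M(2\pi k/M)=0$ whenever $M\nmid k$, and $f_M(2\pi k/M)=M^2$ (in the removable-singularity sense, via $\lim_{x\to0}(\sin(Mx/2)/\sin(x/2))^2=M^2$) whenever $M\mid k$. Hence $f_M(\Delta_{s,t})=M^2$ if $M\mid k_{s,t}$ and $0$ otherwise. The crucial number-theoretic step is to promote $M\mid k_{s,t}=(s+t)q/2$ to $M\mid (s+t)$: since $\gcd(q/2,M)=1$, Euclid's lemma yields $M\mid(s+t)$. (That $q$ is not a multiple of $M$, together with $M$ odd, guarantees this divisibility test is nontrivial, i.e.\ that the anti-diagonal is not enlarged.)

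Then I would bring in the aperture bound. Since $s\in\{-(N_u-1),\dots,N_u-1\}$ and $t\in\{-(N_v-1),\dots,N_v-1\}$, the combined lag satisfies $|s+t|\le N_u+N_v-2<M$, so the only multiple of $M$ in range is $s+t=0$. Consequently $f_M(\Delta_{s,t})=M^2\,\mathbf{1}[t=-s]$, and only the anti-diagonal terms of \eqref{gjk} survive. Substituting $w(s,-s)=(N_u-|s|)(N_v-|s|)$ and $N=N_uN_v$ gives $g_{jk}=\frac{1}{(N_uN_v)^2}\sum_s (N_u-|s|)(N_v-|s|)$; restricting the summation to the lags where the weight is positive, namely $|s|\le N_{\min}-1$ with $N_{\min}=\min(N_u,N_v)$, reproduces \eqref{progjk}.

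I expect the main obstacle to be the number-theoretic passage from $M\mid (s+t)q/2$ to $M\mid (s+t)$, since this is the single place where the coprimality hypothesis $\gcd(q/2,M)=1$ (rather than a weaker condition) is genuinely needed, and it must be combined carefully with the consistent treatment of the $0/0$ removable singularity of $f_M$ on the anti-diagonal; once that is pinned down, the remainder is bookkeeping over the lag ranges and the periodic nulls of the Dirichlet kernel visualized in Fig.~\ref{gjkFig}.
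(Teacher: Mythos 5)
Your proposal is correct and takes essentially the same route as the paper's proof: exploit $\mu_y=\mu_z$ to collapse $\Delta_{s,t}$ to the single lag $s+t$, place the sampled arguments on the Dirichlet-kernel nulls via $\Delta=q\pi/M$, and use the aperture bound $|s+t|\le N_u+N_v-2<M$ to isolate the anti-diagonal $s+t=0$ before the final bookkeeping. The only difference is that at the key divisibility step you correctly invoke $\gcd(q/2,M)=1$ together with Euclid's lemma to pass from $M\mid (s+t)q/2$ to $M\mid(s+t)$, whereas the paper's proof justifies this only by ``$q$ is not a multiple of $M$,'' so your version is, if anything, slightly more rigorous at the one point where the hypotheses are genuinely consumed.
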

\begin{proof}
Under the symmetric deployment condition $\mu_y = \mu_z$, the two-dimensional phase dependency $\Delta_{s,t}$ collapses to a one-dimensional form, $\Delta_{s,t} = (s+t)\Delta$. By defining an auxiliary index $k=s+t$, the summation for $g_{jk}$ can be regrouped as
\begin{align}
 g_{jk} \propto \sum_{k=-(N_u+N_v-2)}^{N_u+N_v-2} W(k) f_M(k\Delta),    
\end{align}
where $W(k) = \sum_{s+t=k} (N_u-|s|)(N_v-|t|)$ is the aggregated weight for a given $k$. The condition $\Delta = q\pi/M$ is specifically chosen to place the nulls of the Dirichlet kernel $f_M(\cdot)$. Substituting $\Delta = q\pi/M$ into the kernel term gives
\begin{align}
f_M(k\Delta)=\left(\frac{\sin(kq\pi/2)}{\sin(kq\pi/(2M))}\right)^2.
\end{align}
Since $q$ is an even integer, the numerator $\sin(kq\pi/2)$ is zero for every integer $k$. For $f_M(k\Delta)$ to be non-zero, the denominator must also be zero, which requires
\begin{align}
\frac{kq}{2M}\in\mathbb{Z}.
\end{align}
Using the condition $\mathrm{gcd}\left(\frac{q}{2},M\right)=1$, this is possible only if $k$ is a multiple of $M$.

The summation range for $k$ is bounded by $|k| \le (N_u-1)+(N_v-1) = N_u+N_v-2$. The premise $M>N_u+N_v-2$ thus ensures that the only integer multiple of $M$ within this range is $k=0$. Consequently, all terms in the summation vanish except for those corresponding to $k=0$.

For $k=0$, the kernel evaluates to its peak value, $f_M(0)=M^2$, and the condition $k=s+t=0$ implies $t=-s$. Since Theorem 1 excludes the case $(s,t)=(0,0)$, the term $s=0$ must be removed. The expression for $g_{jk}$ therefore simplifies to
\begin{align}
 g_{jk}= \frac{1}{(N_uN_v)^2}\sum_{\substack{s=-(N_{\min}-1)\\ s\neq 0}}^{N_{\min}-1}(N_u-|s|)(N_v-|s|).
\end{align}
The summation is over all valid indices $s$ such that $t=-s$ is also a valid index. This constrains $|s| \le N_{\min}-1$, which completes the proof.
\end{proof}

Proposition 2 captures the near-field geometric effect through the closed-form correlation term $g_{jk}$ in \eqref{progjk}. To obtain simple design rules with respect to the IRS size and the BS aperture, we next specialize to a square IRS with $N_u=N_v=\sqrt{N}$, which leads to the following corollary.

\begin{Corollary }
Under the symmetric deployment conditions of Proposition 2, for an IRS with $N_u=N_v=\sqrt{N}$, the residual channel correlation metric $g_{jk}$ is strictly upper-bounded by
\begin{align}
    g_{jk} < \frac{2}{3\sqrt{N}}.
\end{align}

\begin{proof}
When $N_u=N_v=\sqrt{N}$, $g_{jk}$ in \eqref{progjk} becomes $g_{jk} = \frac{1}{N^2} \sum_{\substack{s=-(\sqrt{N}-1)\\ s\neq 0}}^{\sqrt{N}-1} (\sqrt{N}-|s|)^2$. By symmetry, this can be rewritten as $g_{jk} = \frac{2}{N^2}\sum_{r=1}^{\sqrt{N}-1} r^2$. Applying the standard formula for the sum of squares, we obtain
\begin{align}
g_{jk}&=\frac{2}{N^2}\cdot \frac{(\sqrt{N}-1)\sqrt{N}(2\sqrt{N}-1)}{6}
\\
&=\frac{2N-3\sqrt{N}+1}{3N^{3/2}}<\frac{2}{3\sqrt{N}}.
\end{align}
This completes the proof.
\end{proof}
\end{Corollary }

Proposition 2 and Corollary 1 reveal that a large-scale sparse MIMO system can be engineered to create a favorable transmission condition through judicious geometric design. By strategically selecting the IRS placement and array sparsities to align inter-element phase differences with the nulls of the Dirichlet kernel, a high degree of channel decorrelation is achieved. This geometric approach provides a powerful mechanism to counteract the severe inter-user interference typically caused by sparse arrays, thereby rendering the channel substantially more favorable for spatial multiplexing. It thus alleviates the need for highly complex interference cancellation schemes at the transceiver and elevates deployment geometry to a powerful new degree of freedom for interference management.
\section{Ergodic sum-rate Maximization}
\subsection{Problem Formulation}
While the preceding LoS analysis unveiled fundamental physical principles, practical wireless environments are characterized by multipath scattering. To model such scenarios, we now extend our framework by considering a Rician fading model for the IRS-user links, which adeptly captures both the deterministic LoS path and random non-line-of-sight (NLoS) components. The channel for user $k$ is thus modeled as
\begin{align}
\mathbf{g}_k&=\sqrt{\frac{\alpha _k\kappa _k}{\kappa _k+1}}\bar{\mathbf{g}}_k+\sqrt{\frac{\alpha _k}{\kappa _k+1}}\tilde{\mathbf{g}}_k\\
&=\sqrt{\alpha _k\beta _k}\bar{\mathbf{g}}_k+\sqrt{\alpha _k(1-\beta _k)}\tilde{\mathbf{g}}_k,
\end{align}
where $\alpha_k$ is the large-scale path gain, $\beta _k=\frac{\kappa _k}{\kappa _k+1}$, $\kappa _k$ is the Rician factor, $\bar{\mathbf{g}}_k$ is the deterministic LoS component, and $\tilde{\mathbf{g}}_k \sim \mathcal{CN}(\mathbf{0}, \mathbf{I}_N)$ represents the normalized NLoS fading.

The favorable channel statistics designed by our deployment strategy, which significantly reduce inter-user channel correlation, motivate the adoption of a low-complexity precoding scheme. While precoders like Zero-Forcing (ZF) can eliminate interference, they entail high computational complexity due to matrix inversions, particularly in large-scale systems. Consequently, with inter-user interference already substantially suppressed by our near-field deployment, we employ MRT to maximize the desired signal power with reduced complexity. The precoding vector at the BS for user $k$ under the MRT scheme is $\mathbf{w}_k = \sqrt{p_k}\mathbf{h}_k$, where $p_k$ is the power allocation factor for user $k$. Furthermore, $p_k$ is optimized under the total average transmit power constraint $\sum_{k=1}^K p_k \mathbb{E}[\|\mathbf{h}_k\|^2] \le P_{\max}$, where the average is taken over the Rician fading statistics \cite{r27}.
 It is worth noting that the proposed design does not require full online estimation of the large-dimensional cascaded BS-IRS-user channel in every coherence interval. Specifically, the IRS phase shifts are optimized based on slowly varying statistical CSI, while the BS beamforming is updated according to the effective instantaneous CSI under the fixed BS-IRS deployment. Our objective is to maximize the long-term system performance, which is characterized by the ergodic sum-rate as follows:
\begin{align}\label{Rerg}
R_{\text{erg}} = \sum_{k=1}^K \mathbb{E}_{\{\tilde{\mathbf{g}}_k\}} \left[ \log_2(1 + \gamma_k) \right]. 
\end{align}
However, directly optimizing \eqref{Rerg} is difficult and a widely adopted and effective method to handle it is to optimize an approximation of the ergodic sum-rate based on statistical CSI. This leads to an effective SINR expression for user $k$:
\begin{align}\label{E1}
\gamma_k^{\mathrm{approx}} \triangleq \frac{p_k|\mathbb{E}[\mathbf{h}_k^H\mathbf{h}_k]|^2}{\sum_{j \neq k} p_j \mathbb{E}[|\mathbf{h}_k^H\mathbf{h}_j|^2] + \sigma_k^2},    
\end{align}
where
\begin{align}\label{E2}
\mathbb{E}[\mathbf{h}_k^H\mathbf{h}_k] &= C_k + \alpha_k\beta_k \boldsymbol{\theta}^H \mathbf{A}_{k,k} \boldsymbol{\theta}, \\
\mathbb{E}[|\mathbf{h}_k^H\mathbf{h}_j|^2] &\!\!=\!\! \alpha_k\alpha_j\beta_k\beta_j |\boldsymbol{\theta}^H \mathbf{A}_{k,j} \boldsymbol{\theta}|^2 \!+\! \alpha_k\alpha_j\beta_k(1-\beta_j) \boldsymbol{\theta}^H \mathbf{D}_k \boldsymbol{\theta} \nonumber \\
& \quad \!+\! \alpha_k\alpha_j\beta_j(1-\beta_k) \boldsymbol{\theta}^H \mathbf{D}_j \boldsymbol{\theta} + C_{k,j}
\end{align}
with $C_k=\alpha _k(1-\beta _k)\|\mathbf{F}\|_{F}^{2}$, $C_{k,j}=\alpha _k\alpha _j(1-\beta _k)(1-\beta _j)\mathrm{tr((}\mathbf{FF}^H)^2)
$, $\mathbf{A}_{k,j}=\mathrm{diag}(\bar{\mathbf{g}}_k)^H\mathbf{F}^H\mathbf{F}\mathrm{diag}(\bar{\mathbf{g}}_j)$, $\mathbf{D}_k = \mathrm{diag}(\bar{\mathbf{g}}_k)^H \mathbf{F}^H \mathbf{F}\mathbf{F}^H \mathbf{F} \mathrm{diag}(\bar{\mathbf{g}}_k)$, $k,j\in \left\{ 1,2,…,K \right\}$ and $\boldsymbol{\theta }=\left[ e^{j\phi _1},e^{j\phi _2},\cdots ,e^{j\phi _N} \right] ^T$.
Please refer to Appendix B for more details.

Consequently, our optimization problem transforms into maximizing the approximate ergodic sum-rate by jointly designing the IRS phase shifts and the power allocation. The problem is thus formulated as
\begin{align} \label{P_Ergodic_Main}
   \underset{\boldsymbol{\theta}, \mathbf{p}}{\max} \quad & \sum_{k=1}^K \log_2 \left(1 + \gamma_k^{\mathrm{approx}}\right) \\
    \text{s.t.} \quad & |[\boldsymbol{\theta}]_n| = 1,  \forall n \in \{1, \dots, N\}, \label{Theta Constraints}\tag{\ref{P_Ergodic_Main}{a}} \\
    & \sum_{k=1}^K p_k \mathbb{E}[\|\mathbf{h}_k\|^2] \le P_{\max},  p_k \ge 0,  \forall k, \label{Power Constraints}\tag{\ref{P_Ergodic_Main}{b}} 
\end{align}
where $\mathbf{p} = [p_1, \dots, p_K]^T$.

Since problem \eqref{P_Ergodic_Main} is highly non-convex, to overcome it, we first reformulate the objective function into a more tractable form using a sequence of transformations, which enables an efficient alternating optimization algorithm.
First, using the Lagrangian dual method, we introduce auxiliary variables $\boldsymbol{\xi} = [\xi_1, \dots, \xi_K]^T$\cite{r25}. The problem is equivalent to maximizing
\begin{align}\label{L1}
\!\!\mathcal{L} _1(\boldsymbol{\theta },\mathbf{p},\boldsymbol{\xi })\!=\!\sum_{k=1}^K{\left[ \log _2(1+\xi_k)\!-\!\xi_k\!+\!\frac{(1+\xi_k)S_k(\boldsymbol{\theta },p_k)}{J_k(\boldsymbol{\theta },\mathbf{p})\!+\!S_k(\boldsymbol{\theta },p_k)} \right]},
\end{align}
where $S_k(\boldsymbol{\theta}, p_k) = p_k|\mathbb{E}[\mathbf{h}_k^H\mathbf{h}_k]|^2$ represents the statistical signal power and $J_k(\boldsymbol{\theta}, \mathbf{p}) = \sum_{j \neq k} p_j \mathbb{E}[|\mathbf{h}_k^H\mathbf{h}_j|^2] + \sigma_k^2$ represents the statistical interference-plus-noise power. For fixed $(\boldsymbol{\theta}, \mathbf{p})$, the closed-form solution for \eqref{L1} is 
\begin{align}\label{Optimizex}
\xi _{k}^{\star}=\frac{S_k(\boldsymbol{\theta },p_k)}{J_k(\boldsymbol{\theta },\mathbf{p})},k=1,…,K.
\end{align}

Next, we apply the quadratic transform with a second set of auxiliary variables $\boldsymbol{\eta} = [\eta_1, \dots, \eta_K]^T$\cite{r26}. This results in the objective:
\begin{align}
\mathcal{L} _2(\boldsymbol{\theta },\mathbf{p},\boldsymbol{\xi },\boldsymbol{\eta })=\sum_{k=1}^K{\left[ 2\eta _k\sqrt{(1+\xi _k)S_k(\boldsymbol{\theta },p_k)} \right.}
\nonumber\\
-\eta _{k}^{2}\left( J_k\left( \boldsymbol{\theta },\mathbf{p} \right) +S_k(\boldsymbol{\theta },p_k) \right) ].
\end{align}
For fixed $(\boldsymbol{\theta}, \mathbf{p}, \boldsymbol{\xi})$, the closed-form solution is 
\begin{align}\label{Optimizey}
\eta _{k}^{\star}=\frac{\sqrt{(1+\xi _k)S_k(\boldsymbol{\theta },p_k)}}{S_k(\boldsymbol{\theta },p_k)+J_k(\boldsymbol{\theta },\mathbf{p})},k=1,…,K.
\end{align}

The algorithm proceeds by iteratively updating $(\boldsymbol{\xi}, \boldsymbol{\eta})$, $\mathbf{p}$, and $\boldsymbol{\theta}$.

\subsection{Power Allocation Optimization}
With the IRS phases $\boldsymbol{\theta}$ and auxiliary variables $(\boldsymbol{\xi}, \boldsymbol{\eta})$ held fixed, the subproblem for optimizing the power allocation vector $\mathbf{p}$ is derived from maximizing $\mathcal{L}_2$. Isolating the terms dependent on $p_k$ for a single user, the objective for each $p_k$ becomes
\begin{align} \label{P_Power}
   \underset{p_k}{\max} \quad & -c_{1,k} p_k + c_{2,k} \sqrt{p_k} \\
    \text{s.t.} \quad & \eqref{Power Constraints}. \nonumber
\end{align}
where $c_{1,k} = \eta_k^2 |\mathbb{E}[\mathbf{h}_k^H\mathbf{h}_k]|^2 + \sum_{j \neq k} \eta_j^2 \mathbb{E}[|\mathbf{h}_j^H\mathbf{h}_k|^2]$ and $c_{2,k} = 2\eta_k \sqrt{1+\xi_k} |\mathbb{E}[\mathbf{h}_k^H\mathbf{h}_k]|$. Problem \eqref{P_Power} is convex for each $p_k$. Thus, we can solve for the entire vector $\mathbf{p}$ using a block coordinate descent (BCD) approach, where each $p_k$ is updated iteratively. The optimal solution for a single $p_k$, given the others, has a closed-form as follows:
\begin{align}\label{OptimizePower}
p_k^{\star} = \min \left\{ \frac{P_{\max} - \sum_{j \neq k} p_j \mathbb{E}[\|\mathbf{h}_j\|^2]}{\mathbb{E}[\|\mathbf{h}_k\|^2]}, \left( \frac{c_{2,k}}{2c_{1,k}} \right)^2 \right\}.
\end{align}

\subsection{IRS Phase Shift Optimization}
With the power allocation $\mathbf{p}$ and auxiliary variables $(\boldsymbol{\xi}, \boldsymbol{\eta})$ fixed, the remaining subproblem is to optimize the IRS phase vector $\boldsymbol{\theta}$. Based on the surrogate function $\mathcal{L}_2$, it is equivalent to the following minimization problem:
\begin{align} \label{P_Theta}
   \underset{\boldsymbol{\theta}}{\min} \quad & \boldsymbol{\theta}^H \mathbf{Q} \boldsymbol{\theta} + \sum_{k=1}^K \sum_{j=1}^K p_j \eta_k^2 \alpha_k\alpha_j\beta_k\beta_j |\boldsymbol{\theta}^H \mathbf{A}_{k,j} \boldsymbol{\theta}|^2 \\
    \text{s.t.} & \quad |[\boldsymbol{\theta }]_n|\leqslant 1,\forall n\in \{1,\dots ,N\},\label{ThetaNewConstraint}\tag{\ref{P_Theta}{a}} \\
    &\quad\eqref{Power Constraints}, \nonumber
\end{align}
where the complex matrix $\mathbf{Q}$ is defined as
\begin{align}
&\mathbf{Q} = -\left[ \sum_{k=1}^K \left( 2\eta_k \alpha_k\beta_k \sqrt{(1+\xi_k)p_k} - 2\eta_k^2 p_k \alpha_k C_k \beta_k \right) \mathbf{A}_{k,k} \right] \nonumber\\
& \!+\!\sum_{k=1}^K{\eta _{k}^{2}}\sum_{j\ne k}^K{p_j}\alpha _k\alpha _j\left( \beta _k(1-\beta _j)\mathbf{D}_k\!+\!\beta _j(1-\beta _k)\mathbf{D}_j \right).
\end{align}

To address the non-convexity of the unit-modulus constraint in \eqref{Theta Constraints}, we first relax it to the convex inequality in \eqref{ThetaNewConstraint} for tractable optimization, and then project the converged solution back via the mapping $\theta_n \leftarrow e^{j\arg\{\theta_n\}}$. However, problem \eqref{P_Theta} is still highly non-convex. We solve it using the Alternating Direction Method of Multipliers (ADMM) by introducing auxiliary variables $\mathbf{z}, \mathbf{k}, \mathbf{s}$ and the consensus constraints $\mathbf{z}=\boldsymbol{\theta}, \mathbf{k}=\boldsymbol{\theta}, \mathbf{s}=\boldsymbol{\theta}$. The augmented Lagrangian is formulated as
\begin{align} \label{Lrho}
   \underset{\boldsymbol{\theta},\mathbf{z}, \mathbf{k}, \mathbf{s}}{\min} \quad & \mathcal{L}_{\rho}(\boldsymbol{\theta}, \mathbf{z}, \mathbf{k}, \mathbf{s}) \\
    \text{s.t.} & \quad \mathbf{z}=\boldsymbol{\theta}, \mathbf{k}=\boldsymbol{\theta}, \mathbf{s}=\boldsymbol{\theta},\tag{\ref{Lrho}{a}} \\ 
    & \quad |[\mathbf{k}]_n|\leqslant 1, \forall n,\tag{\ref{Lrho}{b}}\label{Kconstrains}\\
    & \quad \sum_{k=1}^K{p_k}\left( C_k+\alpha _k\beta _k\mathbf{s}^H\mathbf{A}_{k,k}\mathbf{s} \right) \le P_{\max},\label{Sconstraints}\tag{\ref{Lrho}{c}}
\end{align}
where $\mathcal{L} _{\rho}(\boldsymbol{\theta },\mathbf{z},\mathbf{k},\mathbf{s})=\Re \{\boldsymbol{\theta }^H\mathbf{Qz}\}+\rho \parallel \mathbf{z}\!-\!\boldsymbol{\theta }+\boldsymbol{\lambda }_z\parallel _{2}^{2}+\sum_{k,j}{p_j}\eta _{k}^{2}\alpha _k\alpha _j\beta _k\beta _j|\boldsymbol{\theta }^H\mathbf{A}_{k,j}\mathbf{z}|^2\!+\!\rho \parallel \mathbf{k}-\boldsymbol{\theta }+\boldsymbol{\lambda }_k\parallel _{2}^{2}\!+\!\rho \parallel \mathbf{s}-\boldsymbol{\theta }+\boldsymbol{\lambda }_s\parallel _{2}^{2}$, $\rho>0$ is the penalty parameter, and $\boldsymbol{\lambda}_z, \boldsymbol{\lambda}_k, \boldsymbol{\lambda}_s$ are the scaled dual variables. The variables are updated iteratively:
\subsubsection{Update for $\boldsymbol{\theta}$} With $(\mathbf{z}, \mathbf{k}, \mathbf{s})$ fixed, the subproblem for $\boldsymbol{\theta}$ is an unconstrained quadratic minimization:
\begin{align}
\underset{\boldsymbol{\theta}}{\min} \quad \boldsymbol{\theta}^H(\mathbf{\Upsilon}(\mathbf{z}) + 3\rho\mathbf{I})\boldsymbol{\theta} - 2\Re\left\{\boldsymbol{\theta}^H \mathbf{v}_{\theta} \right\}
\end{align}
    where $\mathbf{\Upsilon}(\mathbf{z}) = \sum_{k,j} \eta_k^2 p_j \alpha_k\alpha_j\beta_k\beta_j (\mathbf{A}_{k,j}\mathbf{z})(\mathbf{z}^H\mathbf{A}_{k,j}^H)$ and we define the composite vector $\mathbf{v}_{\theta}$ for notational simplicity:
\begin{align}
\mathbf{v}_{\theta} \triangleq \rho(\mathbf{z}+\boldsymbol{\lambda}_z + \mathbf{k}+\boldsymbol{\lambda}_k + \mathbf{s}+\boldsymbol{\lambda}_s) - \frac{1}{2}\mathbf{Qz}.
\end{align}
The closed-form solution is found by setting the gradient to zero:
\begin{align}\label{OptimizeTheta}
\boldsymbol{\theta}^\star = (\mathbf{\Upsilon}(\mathbf{z}) + 3\rho\mathbf{I})^{-1} \mathbf{v}_{\theta}.
\end{align}

\subsubsection{Update for $\mathbf{z}$} Similarly, with other variables fixed, the subproblem for $\mathbf{z}$ is:
\begin{align}\label{ZOpti}
\underset{\mathbf{z}}{\min} \quad \mathbf{z}^H(\mathbf{\Upsilon}'(\boldsymbol{\theta}) + \rho\mathbf{I})\mathbf{z} - 2\Re\left\{\mathbf{z}^H \mathbf{v}_{z} \right\}    
\end{align}
where $\mathbf{\Upsilon}'(\boldsymbol{\theta}) = \sum_{k,j} \eta_k^2 p_j \alpha_k\alpha_j\beta_k\beta_j (\mathbf{A}_{k,j}^H\boldsymbol{\theta})(\boldsymbol{\theta}^H\mathbf{A}_{k,j})$ and the composite vector $\mathbf{v}_{z}$ is
\begin{align}
\mathbf{v}_{z} \triangleq \rho(\boldsymbol{\theta}-\boldsymbol{\lambda}_z) - \frac{1}{2}\mathbf{Q}^H\boldsymbol{\theta}. 
\end{align}
The closed-form solution is
\begin{align}\label{OptimizeZ}
\mathbf{z}^\star = (\mathbf{\Upsilon}'(\boldsymbol{\theta}) + \rho\mathbf{I})^{-1} \mathbf{v}_{z}.   
\end{align}

\subsubsection{Update for $\mathbf{k}$} The subproblem for $\mathbf{k}$ the Euclidean projection onto the relaxed unit-disk feasible set:
    \begin{align}
       \underset{\mathbf{k}}{\min}\quad &\parallel \mathbf{k}-(\boldsymbol{\theta }-\boldsymbol{\lambda }_k)\parallel _{2}^{2} \\
        \text{s.t.} \quad & \eqref{Kconstrains}. \nonumber
    \end{align}
The optimal update for each element of $\mathbf{k}$ is given in the following closed form: 
\begin{align}\label{optimizeK}
[\mathbf{k}^{\star}]_n=\begin{cases}
	[\boldsymbol{\theta }-\boldsymbol{\lambda }_k]_n,\mathrm{if} |[\boldsymbol{\theta }-\boldsymbol{\lambda }_k]_n|\le 1.\\
	\frac{[\boldsymbol{\theta }-\boldsymbol{\lambda }_k]_n}{|[\boldsymbol{\theta }-\boldsymbol{\lambda }_k]_n|},\mathrm{otherwise}.
\end{cases}
    \end{align}

\begin{algorithm}[t]
\caption{Proposed Algorithm for Problem\eqref{P_Ergodic_Main}}
\label{alg:main_algorithm}
\begin{algorithmic}[1]
\STATE \textbf{Initialize:} Set iteration counter $t=0$, tolerance $\delta$. Initialize variables $\mathbf{p}^{(0)}$, $\boldsymbol{\theta}^{(0)}$, $\mathbf{z}^{(0)}$, $\mathbf{k}^{(0)}$, $\mathbf{s}^{(0)}$ and dual variables $\boldsymbol{\lambda}_z^{(0)}, \boldsymbol{\lambda}_k^{(0)}, \boldsymbol{\lambda}_s^{(0)}$ to feasible values.

\STATE \textbf{repeat}
\STATE Update auxiliary variables $\boldsymbol{\xi}^{(t+1)}$ and $\boldsymbol{\eta}^{(t+1)}$ by \eqref{Optimizex} and \eqref{Optimizey}.
\STATE Update power allocation $\mathbf{p}^{(t+1)}$ by \eqref{OptimizePower}.
\STATE Update primal variables $\boldsymbol{\theta}^{(t+1)}$, $\mathbf{z}^{(t+1)}$, $\mathbf{k}^{(t+1)}$, $\mathbf{s}^{(t+1)}$ by \eqref{OptimizeTheta}, \eqref{OptimizeZ}, \eqref{optimizeK}, and \eqref{OptimizeS}, respectively.
\STATE Update dual variables $\boldsymbol{\lambda}_z^{(t+1)}, \boldsymbol{\lambda}_k^{(t+1)}, \boldsymbol{\lambda}_s^{(t+1)}$ by \eqref{dualUpdate}.
\STATE Set $t = t+1$.
\STATE \textbf{until} the improvement of the objective value of \eqref{P_Ergodic_Main} is below $\delta$.
\STATE \textbf{Output:} The optimized power allocation $\mathbf{p}^{\star} = \mathbf{p}^{(t)}$ and phase shifts $\theta_n \leftarrow e^{j\arg\{\theta_n\}}$.
\end{algorithmic}
\end{algorithm}

\subsubsection{Update for $\mathbf{s}$} The subproblem for $\mathbf{s}$ is the convex quadratically constrained quadratic program (QCQP):
    \begin{align}\label{OptimizeS}
        \underset{\mathbf{s}}{\min} \quad & \|\mathbf{s}-(\boldsymbol{\theta}-\boldsymbol{\lambda}_s)\|^2 \\
        \text{s.t.} \quad & \eqref{Sconstraints}. \nonumber
    \end{align}
Since \eqref{OptimizeS} involves only a single quadratic power constraint, it can be solved efficiently by a one-dimensional bisection over the associated Lagrange multiplier, or alternatively by standard convex optimization solvers such as CVX \cite{r28}.

\subsubsection{Dual Variable Updates} The dual variables are updated via standard gradient ascent steps as follows:
\begin{align}\label{dualUpdate}
\begin{cases}
	\boldsymbol{\lambda }_{z}^{t+1}=\boldsymbol{\lambda }_{z}^{t}+\mathbf{z}-\boldsymbol{\theta }\\
	\boldsymbol{\lambda }_{k}^{t+1}=\boldsymbol{\lambda }_{k}^{t}+\mathbf{k}-\boldsymbol{\theta }\\
	\boldsymbol{\lambda }_{s}^{t+1}=\boldsymbol{\lambda }_{s}^{t}+\mathbf{s}-\boldsymbol{\theta }\\
\end{cases}
\end{align}
where $t$ represents the iteration count.

The overall algorithm for \eqref{P_Ergodic_Main} is summarized in Algorithm 1. The alternating optimization structure guarantees the algorithm's convergence to a stationary point.  The computational cost of Algorithm 1 can be divided into an offline preprocessing stage and an online iterative stage. In the preprocessing stage, forming the statistical matrices related to $\mathbf{F}^{H}\mathbf{F}$, $\{\mathbf{A}_{k,j}\}$, and $\{\mathbf{D}_{k}\}$ requires complexity $\mathcal{O}(MN^{2} + N^{3} + K^{2}N^{2})$. In each outer iteration, the updates of the auxiliary variables require $\mathcal{O}(K^{2}N^{2})$, while the power-allocation update admits a closed-form solution with much lower complexity. The dominant online cost comes from the IRS phase-shift optimization, whose ADMM/QCQP step requires $\mathcal{O}(N^{3.5})$. Therefore, the overall complexity of Algorithm 1 is
$\mathcal{O}\big(MN^{2} + N^{3} + K^{2}N^{2} + I(K^{2}N^{2} + N^{3.5})\big)$, where $I$ denotes the number of alternating iterations\cite{10915665}. This result also reveals the scalability of the proposed algorithm: the complexity increases quadratically with the number of users $K$, depends mainly linearly on the number of BS antennas $M$ through the preprocessing stage, and is more sensitive to the IRS element number $N$ due to the IRS phase-shift optimization. Moreover, since the proposed design is based on long-term statistical CSI and MRT precoding, the optimization is performed on a slower timescale and does not need to be updated in every coherence block, which alleviates the practical latency burden.

\section{Numerical Results}
 In this section, we provide numerical results to validate our theoretical analyses and demonstrate the performance of the proposed optimization algorithm. Unless specified otherwise, the four single-antenna users are located at $[10, 70, 0]^T$ m, $[30, 60, 0]^T$ m, $[20, 50, 0]^T$ m, and $[45, 45, 0]^T$ m, respectively. Other parameters are given in Table~\ref{tab:sim_params}.

\begin{table}[t]
\caption{ Simulation Parameters}
\label{tab:sim_params}
\centering
\em
\scriptsize
\renewcommand{\arraystretch}{1.12}
\setlength{\tabcolsep}{3pt}
\resizebox{\columnwidth}{!}{%
\begin{tabular}{|c|l|c|}
\hline
\textbf{Parameter} & \textbf{Description} & \textbf{Value} \\
\hline
$f$ & Carrier frequency & {\centering 60 GHz\par} \\
\hline
$M$ & Number of BS antennas & {\centering 128\par} \\
\hline
$\mathbf{r}_{\mathrm{BS}}$ & BS location & $[0,\,0,\,0]^T$ m \\
\hline
$\zeta_{\mathrm{BS}}$ & BS sparsity factor & {\centering 3\par} \\
\hline
$\zeta_{\mathrm{IRS}}$ & IRS sparsity factor & {\centering 6\par} \\
\hline
$\mathbf{r}_{\mathrm{IRS}}$ & Default IRS center location & $[-0.72,\,0.51,\,0.51]^T$ m \\
\hline
$K$ & Number of users & {\centering 4\par} \\
\hline
$\sigma_k^2$ & Noise power & {\centering $-115$ dBm\par} \\
\hline
$N_{\mathrm{MC}}$ & Monte Carlo runs per point & {\centering 1000\par} \\
\hline
$\delta $ & Convergence tolerance &{\centering $10^{-3}$\par}\\
\hline
\end{tabular}
}
\vspace{-10pt}
\end{table}

Fig. \ref{varFig} illustrates the favorable propagation metric $g_{jk}$ versus the IRS's y-coordinate for various array sparsity configurations when $N=1600$ and the IRS is centered at $[-3m, l_y, 3m]$. The analytical model from \eqref{gjk} shows excellent agreement with the Monte Carlo simulations, which confirms the fidelity of our near-field channel correlation model. It also demonstrates the profound impact of array sparsity on channel decorrelation. Specifically, as the sparsity factors ($\zeta_{\mathrm{BS}}$, $\zeta_{\mathrm{IRS}}$) increase, the value of $g_{jk}$ is substantially reduced, indicating that the user channels become increasingly orthogonal. This behavior is attributed to the larger effective apertures, which accentuate the near-field wavefront curvature and thus generate more distinct phase variations across the IRS elements—a phenomenon accurately captured by the $\Delta_{s,t}$ term in our model. In contrast, the conventional dense array setup ($\zeta_{\mathrm{BS}}=1, \zeta_{\mathrm{IRS}}=1$) yields a persistently high correlation ($g_{jk}>0.75$), highlighting the necessity of sparse arrays for achieving favorable propagation in the near-field.

\begin{figure}[t]
\centering
\includegraphics[width=0.5\textwidth]{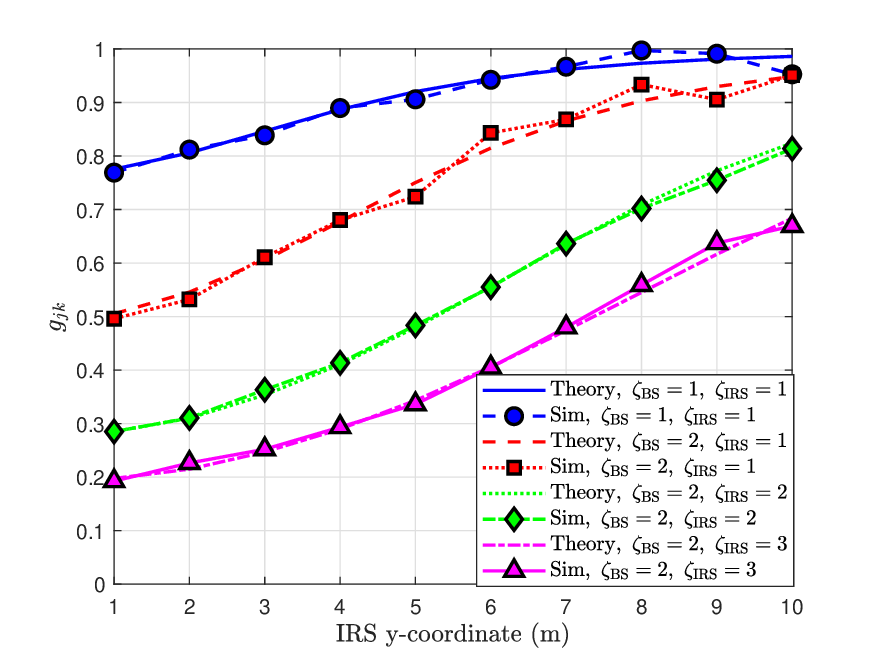}
\caption{ $g_{jk}$ versus sparsity and IRS position.}
\label{varFig}
\vspace{-3pt}
\end{figure}

 To evaluate the sensitivity of the proposed deployment rule to practical installation mismatch, we perturb the fundamental phase increment around the ideal design point in Proposition 2 and set $\Delta=\eta\Delta_0$, where $\Delta_0=q\pi/M$ denotes the ideal value. Fig. \ref{RobustFig} shows the resulting favorable-propagation metric $g_{jk}$ versus $\eta=\Delta/\Delta_0$. It is observed that the minimum is attained at $\eta=1$, as expected. More importantly, $g_{jk}$ varies smoothly around the optimum, and moderate mismatch only causes limited degradation. In particular, for perturbations around $\pm 10\%$, the resulting $g_{jk}$ remains close to the ideal analytical level and stays near the upper bound characterized by Corollary~1, indicating that the decorrelation effect predicted by Proposition 2 is robust to moderate geometric inaccuracies. Even when the mismatch increases to approximately $\pm 15\%$, the increase in $g_{jk}$ is still limited. This confirms that the proposed deployment rule is not a fragile point solution, but a practically useful design guideline with non-negligible tolerance.

 \begin{figure}[t]
\centering
\includegraphics[width=0.5\textwidth]{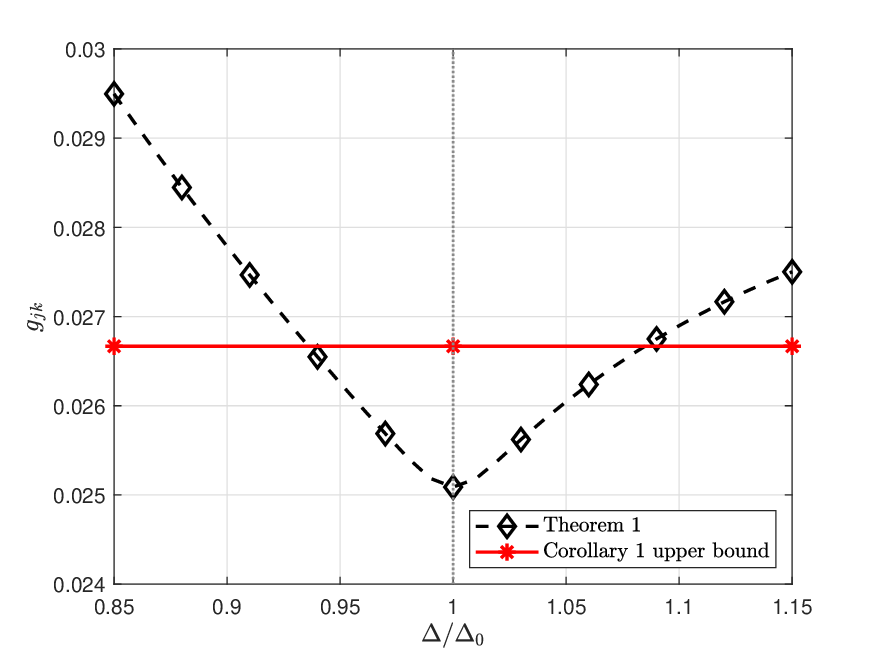}
\caption{ Robustness of the favorable-propagation metric $g_{jk}$ to phase-increment mismatch around the ideal deployment point $\Delta_0$.}
\label{RobustFig}
\vspace{-3pt}
\end{figure}

\begin{figure}[t]
\centering
\includegraphics[width=0.5\textwidth]{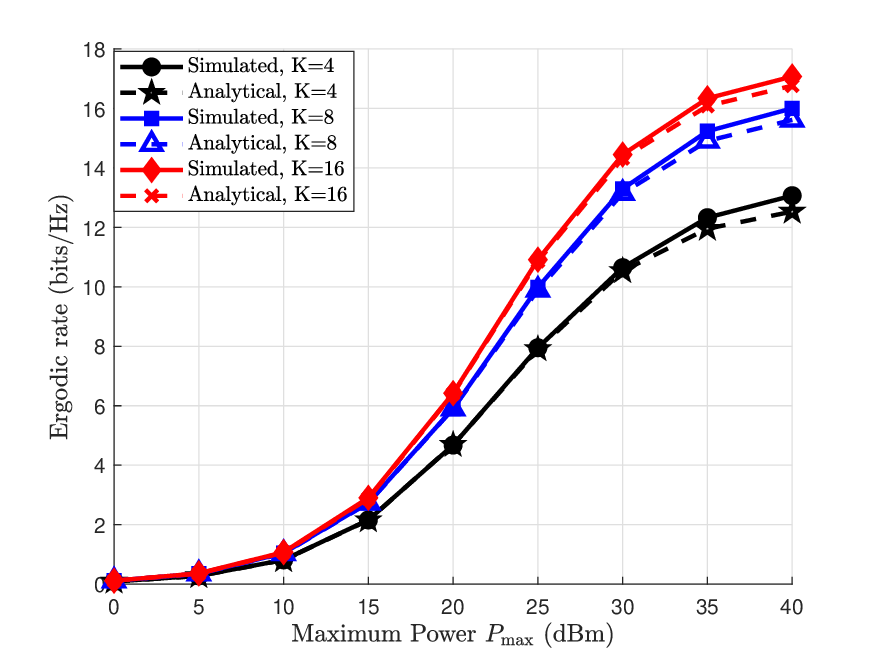}
\caption{Ergodic sum-rate versus maximum transmit power $P_{\max}$ when $N$=400.}
\label{RAppro}
\vspace{-3pt}
\end{figure}

\begin{figure}[t]
\centering
\includegraphics[width=0.5\textwidth]{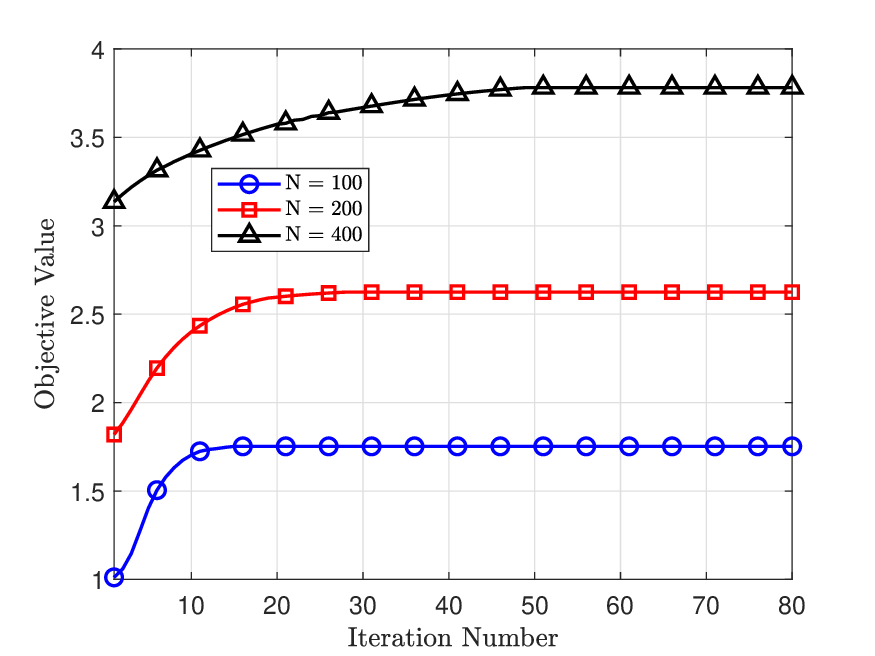}
\caption{ Convergence behavior of the proposed algorithm.}
\label{ConvergeFig}
\vspace{-3pt}
\end{figure}

\begin{figure}[t]
\centering
\includegraphics[width=0.5\textwidth]{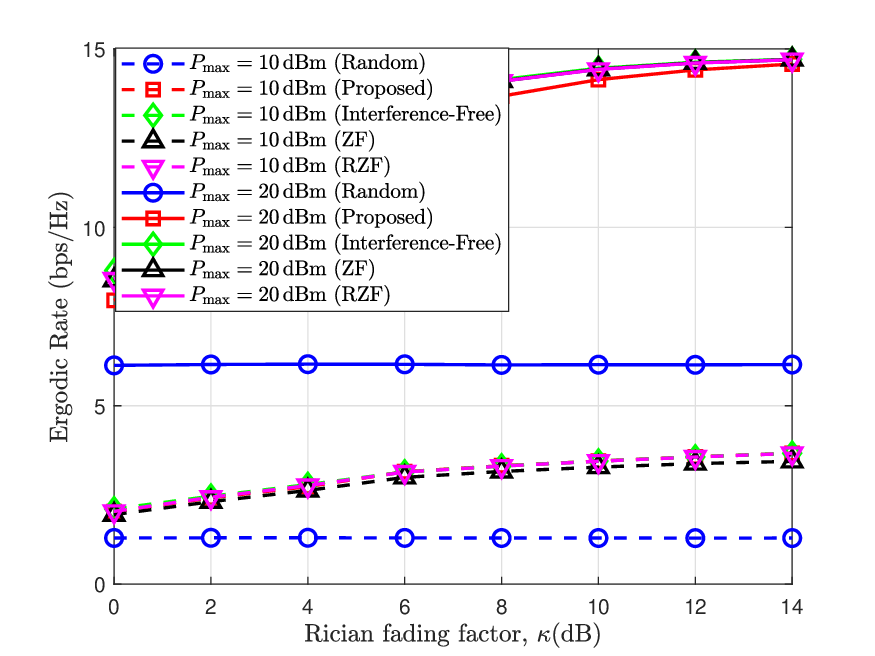}
\caption{ Ergodic sum-rate versus Rician factor $\kappa$ under different transmit power when $N$=400.}
\label{RATEFig}
\vspace{-3pt}
\end{figure}

\begin{figure}[t]
\centering
\includegraphics[width=0.5\textwidth]{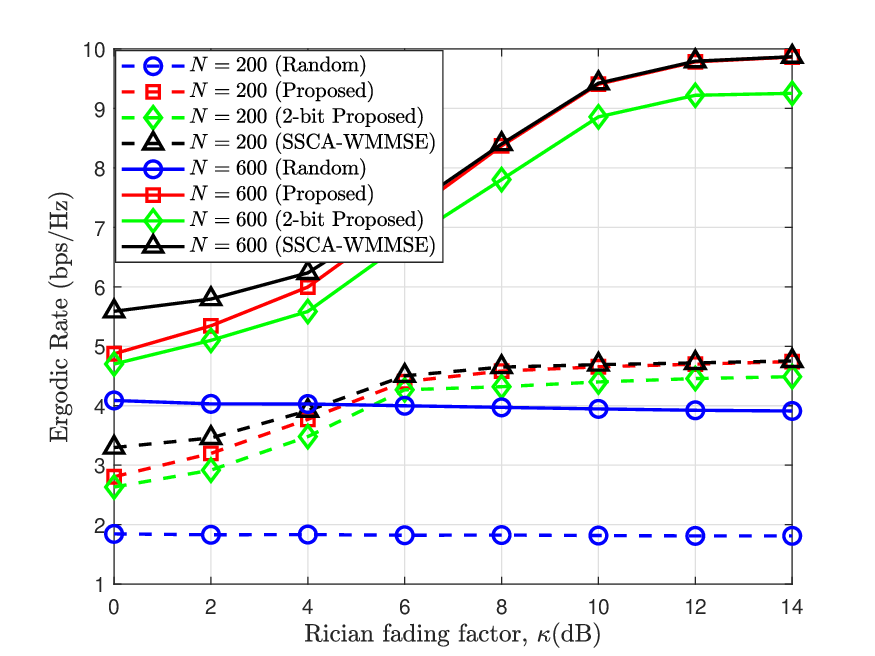}
\caption{ Ergodic sum-rate versus Rician factor $\kappa$ for different IRS elements when $P_{\max}$=15dBm.}
\label{RATEv2}
\vspace{-3pt}
\end{figure}

\begin{figure}[t]
\centering
\includegraphics[width=0.5\textwidth]{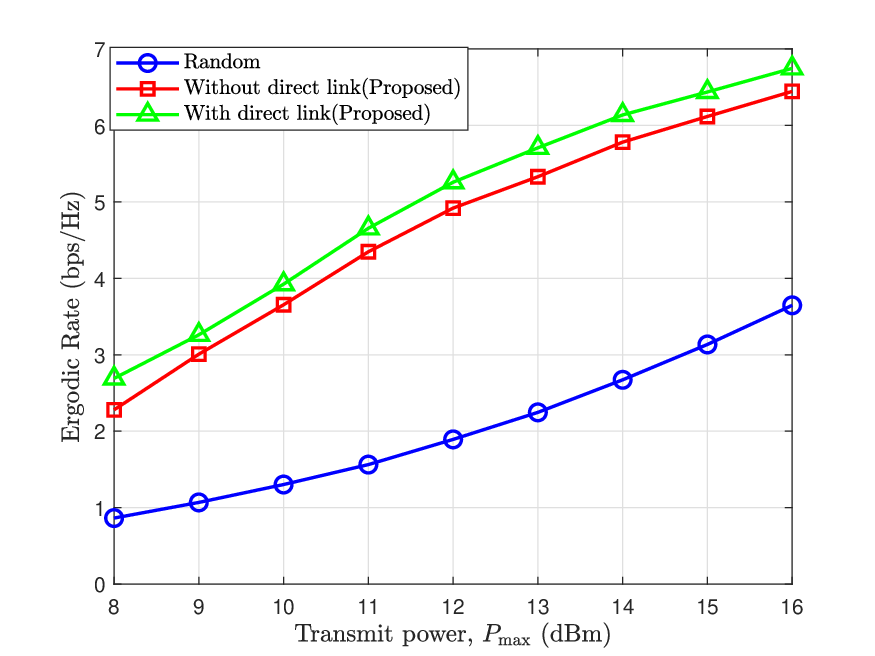}
\caption{ Effect of direct link on the ergodic rate when $N$=500.}
\label{EDoFv2}
\vspace{-3pt}
\end{figure}
Fig. \ref{RAppro} plots the ergodic sum-rate versus the maximum transmit power $P_{\max}$ to validate the accuracy of our statistical CSI-based approximation used in problem \eqref{P_Ergodic_Main}. The analytical results are benchmarked against extensive Monte Carlo simulations for systems with $K=4, 8$, and $16$ users. An excellent agreement is evident between the analytical and simulated curves for all user loads across the entire power range. This confirms that our approximation, which relies on first and second-order channel statistics, accurately captures the system's long-term performance. Furthermore, the results exhibit the expected monotonic increase in sum-rate with both $P_{\max}$ and the number of users, which aligns with fundamental system behaviors. The validated accuracy of this approximation justifies its use as a tractable yet reliable objective for the joint optimization of IRS phase shifts and power allocation.

Then, we validate the proposed algorithm's convergence. Fig. \ref{ConvergeFig} shows that the approximate ergodic sum-rate increases monotonically and converges for different numbers of IRS elements $N$. This confirms the reliable convergence of our approach. As expected, the results also demonstrate that a larger $N$ yields substantial performance gains.

 Fig. \ref{RATEFig} investigates the impact of the Rician factor $\kappa$ on the ergodic sum-rate performance of the proposed algorithm under different transmit power levels. We benchmark the performance against four schemes, including (i) Random, where the IRS phase shifts are randomly selected, (ii) Interference-Free, which serves as a theoretical benchmark, corresponding to the sum-rate achieved with the optimized IRS phase shifts from the proposed algorithm when the inter-user interference is artificially removed, (iii) ZF, where the BS employs zero-forcing precoding as a linear interference-suppression baseline, and (iv) RZF (regularized ZF), where the BS adopts regularized zero-forcing precoding to balance interference suppression and noise enhancement. As shown in Fig. \ref{RATEFig}, the proposed design consistently outperforms the Random baseline over the entire range of $\kappa$ at both transmit power levels $P_{\max}=10$ dBm and $P_{\max}=20$ dBm. In addition, the proposed scheme performs close to the Interference-Free benchmark and remains competitive with the ZF and RZF baselines, which demonstrates that the proposed near-field deployment together with the deterministic IRS phase optimization can effectively suppress inter-user interference even under the low-complexity MRT framework. It is also observed that the ergodic sum-rate of all schemes increases with $\kappa$, since a stronger LoS component renders the channel more deterministic and less sensitive to fading.

 As shown in Fig. \ref{RATEv2}, the proposed scheme achieves performance close to that of the SSCA-WMMSE benchmark\cite{zhaominmin} over the whole $\kappa$ range, while the 2-bit proposed design remains close to the continuous-phase case, indicating that most of the performance gain can be retained even with low-resolution phase control. It is worth noting that the SSCA-WMMSE benchmark relies on a stronger two-timescale joint optimization framework, where the IRS is optimized via SSCA in the long timescale and the transmit precoder is refined via WMMSE in the short timescale, and thus has a slight performance advantage. In contrast, the proposed method only uses low-complexity MRT together with statistical-CSI-based IRS phase/power optimization, with several subproblems admitting closed-form updates and without requiring heavy online iterations in every coherence block, which makes it more attractive in terms of complexity, latency, and practical implementation.

 To further assess the robustness of the proposed framework, Fig. \ref{EDoFv2} illustrates the ergodic sum-rate versus the transmit power $P_{\max}$ for the cases with and without the direct link, where the Random scheme is also included as a baseline. It is first noted that the direct BS-user link is modeled as a Rician channel with the channel-quality factor $I_{\text{direct}}=0.01$\cite{ShiJin-Channel}. Under this setting, the curve with the direct link is consistently higher than that without the direct link over the whole transmit-power range. Nevertheless, the performance improvement is only marginal. This indicates that, although the direct path provides an additional signal contribution, its strength remains limited. In contrast, the optimized IRS-reflected link still accounts for the dominant portion of the received power. Therefore, the overall performance trend in Fig. \ref{EDoFv2} is still mainly determined by the proposed near-field IRS design rather than by the direct link itself.

\section{Conclusion}
In this paper, we investigated an IRS-assisted sparse MIMO system and showed that near-field propagation, when exploited by sparse arrays, can serve as a key enabler for effective multi-user spatial multiplexing. Our analysis revealed that near-field effects can substantially alleviate the user-channel correlation bottleneck inherent in far-field models, and we derived a closed-form favorable propagation metric and a geometry-driven deployment rule to guide system design. Furthermore, we showed that deterministic engineering of the IRS phases shapes the user channels into a well-conditioned form that supports robust interference management. For practical Rician fading channels, we developed an efficient alternating optimization algorithm to maximize the ergodic sum-rate by jointly designing the IRS phase shifts and user power allocation based on statistical CSI. Numerical results validated the analysis and demonstrated substantial performance gains, indicating that near-field sparse array engineering is a promising approach for enhancing the spatial multiplexing capability of IRS-assisted communications.  Future work will extend the proposed framework to account for mutual coupling, calibration errors, IRS hardware degradation, and wideband frequency-dependent near-field modeling.

\section*{APPENDIX A}
We derive the variance of the inner product $Y = \mathbf{h}_{j}^{H}\mathbf{h}_k$ for $j \neq k$. The variance is defined as $\mathrm{var}\{Y\} = \mathbb{E}[|Y|^2] - |\mathbb{E}[Y]|^2$.

The inner product can be expanded as $Y = \sum_{n_1, n_2} T_{n_1, n_2} e^{-\jmath\phi_{n_1}} e^{\jmath\phi_{n_2}}$, where $T_{n_a, n_b} \triangleq [\mathbf{g}_j]_{n_a}^* [\mathbf{F}^H\mathbf{F}]_{n_a,n_b} [\mathbf{g}_k]_{n_b}$. The second moment $\mathbb{E}[|Y|^2]$ is given by the four-fold summation:
\begin{align}
\mathbb{E} [|Y|^2]\!=\!\sum_{n_1,n_2,n_3,n_4}{T_{n_1,n_2}}T_{n_3,n_4}^{*}\mathbb{E} [e^{\jmath\psi \left( n_1,n_2,n_3,n_4 \right)}],
\end{align}
where the phase argument is $\psi \left( n_1,n_2,n_3,n_4 \right) =-\phi _{n_1}+\phi _{n_2}+\phi _{n_3}-\phi _{n_4}$. Given that the phase shifts $\{\phi_n\}$ are i.i.d. and drawn from $\mathcal{U}[0, 2\pi)$, the expectation of the complex exponential is non-zero (and equals 1) if and only if the indices are paired such that the net coefficient of each independent phase variable in the exponent is zero. This condition admits two surviving scenarios for the index pairings:
\begin{enumerate}
    \item[(a)] $n_1 = n_2$ and $n_3 = n_4$.
    \item[(b)] $n_1 = n_3$ and $n_2 = n_4$.
\end{enumerate}
Using the principle of inclusion-exclusion, the second moment is the sum of contributions from case (a) and case (b), minus the contribution from their intersection (where $n_1=n_2=n_3=n_4$).

The contribution from case (a) corresponds to $\sum_{n_1, n_3} T_{n_1,n_1}T_{n_3,n_3}^* = |\sum_{n=1}^N T_{n,n}|^2 = |\mathbb{E}[Y]|^2$.
The contribution from case (b) corresponds to $\sum_{n_1, n_2} T_{n_1,n_2}T_{n_1,n_2}^* = \sum_{n_1, n_2} |T_{n_1,n_2}|^2$.
The intersection is the case $n_1=n_2=n_3=n_4$, whose contribution is $\sum_{n=1}^N |T_{n,n}|^2$.

Combining these yields the second moment:
\begin{align}
\mathbb{E}[|Y|^2] &= |\mathbb{E}[Y]|^2 + \sum_{n_1, n_2} |T_{n_1,n_2}|^2 - \sum_{n=1}^N |T_{n,n}|^2 \nonumber \\
&= |\mathbb{E}[Y]|^2 + \sum_{n_1 \neq n_2} |T_{n_1,n_2}|^2.
\end{align}
The variance is then found by subtracting $|\mathbb{E}[Y]|^2$:
\begin{align}
\mathrm{var}\{\mathbf{h}_{j}^{H}\mathbf{h}_k\} = \sum_{n_1 \neq n_2} |T_{n_1, n_2}|^2.
\end{align}
Substituting the definition of $T_{n_1, n_2}$ gives the final result:
\begin{align}
\mathrm{var}\{\mathbf{h}_{j}^{H}\mathbf{h}_k\} = \sum_{n_1 \neq n_2} |[\mathbf{g}_j]_{n_1}|^2 |[\mathbf{g}_k]_{n_2}|^2 |[\mathbf{F}^H\mathbf{F}]_{n_1,n_2}|^2.
\end{align}
\section*{APPENDIX B}
We provide the derivations for the statistical moments used in the ergodic rate analysis. The cascaded channel is decomposed as $\mathbf{h}_k = \bar{\mathbf{h}}_k + \tilde{\mathbf{h}}_k$, comprising a deterministic LoS component $\bar{\mathbf{h}}_k = \sqrt{\alpha_k\beta_k} \mathbf{F} \mathbf{\Theta} \bar{\mathbf{g}}_k$ and a zero-mean NLoS component $\tilde{\mathbf{h}}_k = \sqrt{\alpha_k(1-\beta_k)} \mathbf{F} \mathbf{\Theta} \tilde{\mathbf{g}}_k$, where $\mathbb{E}[\tilde{\mathbf{g}}_k \tilde{\mathbf{g}}_k^H] = \mathbf{I}_N$. To facilitate the derivation with respect to the IRS phase shifts $\boldsymbol{\theta} = [e^{j\phi_1}, \dots, e^{j\phi_N}]^T$, we utilize the algebraic identity $\mathbf{\Theta}\mathbf{v} = \mathrm{diag}(\mathbf{v})\boldsymbol{\theta}$ for any vector $\mathbf{v}$.

The first moment $\mathbb{E}[\mathbf{h}_k^H\mathbf{h}_k]$ represents the average channel power. Leveraging the linearity of expectation and the fact that cross-terms vanish due to $\mathbb{E}[\tilde{\mathbf{h}}_k]=\mathbf{0}$, we have
\begin{align}
\mathbb{E}[\mathbf{h}_k^H\mathbf{h}_k] &= \|\bar{\mathbf{h}}_k\|^2 + \mathbb{E}[\|\tilde{\mathbf{h}}_k\|^2] \nonumber \\
&= \alpha_k\beta_k \|\mathbf{F}\mathrm{diag}(\bar{\mathbf{g}}_k)\boldsymbol{\theta}\|^2 + \alpha_k(1-\beta_k)\mathrm{tr}\left( \mathbf{F}\mathbf{\Theta}\mathbf{\Theta}^H\mathbf{F}^H \right) \nonumber \\
&= \alpha_k\beta_k \boldsymbol{\theta}^H \mathbf{A}_{k,k} \boldsymbol{\theta} + C_k, \label{eq:first_moment}
\end{align}
where $\mathbf{A}_{k,k} \triangleq \mathrm{diag}(\bar{\mathbf{g}}_k)^H \mathbf{F}^H \mathbf{F} \mathrm{diag}(\bar{\mathbf{g}}_k)$ and $C_k \triangleq \alpha_k(1-\beta_k)\|\mathbf{F}\|_F^2$.

Next, we evaluate the second moment $\mathbb{E}[|\mathbf{h}_k^H\mathbf{h}_j|^2]$ for $j \neq k$. Since $\tilde{\mathbf{h}}_k$ and $\tilde{\mathbf{h}}_j$ are independent zero-mean complex Gaussian vectors, terms involving an odd number of stochastic components evaluate to zero. Consequently, only four terms persist in the expectation:
\begin{align}
\mathbb{E}[|\mathbf{h}_k^H\mathbf{h}_j|^2] &= |\bar{\mathbf{h}}_k^H\bar{\mathbf{h}}_j|^2 + \mathbb{E}[|\bar{\mathbf{h}}_k^H\tilde{\mathbf{h}}_j|^2] \nonumber \\
&+ \mathbb{E}[|\tilde{\mathbf{h}}_k^H\bar{\mathbf{h}}_j|^2] + \mathbb{E}[|\tilde{\mathbf{h}}_k^H\tilde{\mathbf{h}}_j|^2].
\end{align}
The first term is purely deterministic and can be rewritten as:
\begin{align}
|\bar{\mathbf{h}}_k^H\bar{\mathbf{h}}_j|^2 &= \alpha_k\alpha_j\beta_k\beta_j |\boldsymbol{\theta}^H \mathrm{diag}(\bar{\mathbf{g}}_k)^H \mathbf{F}^H \mathbf{F} \mathrm{diag}(\bar{\mathbf{g}}_j) \boldsymbol{\theta}|^2 \nonumber \\
&= \alpha_k\alpha_j\beta_k\beta_j |\boldsymbol{\theta}^H \mathbf{A}_{k,j} \boldsymbol{\theta}|^2.
\end{align}
For the cross-terms, we apply the property $\mathbb{E}[|\mathbf{a}^H\mathbf{z}|^2] = \mathbf{a}^H \mathbb{E}[\mathbf{z}\mathbf{z}^H]\mathbf{a}$. Substituting the covariance $\mathbb{E}[\tilde{\mathbf{h}}_j\tilde{\mathbf{h}}_j^H] = \alpha_j(1-\beta_j)\mathbf{F}\mathbf{F}^H$, we obtain:
\begin{align}
\!\!\!\!\mathbb{E}[|\bar{\mathbf{h}}_k^H\tilde{\mathbf{h}}_j|^2] \!\!= \!\!\bar{\mathbf{h}}_k^H \mathbb{E}[\tilde{\mathbf{h}}_j\tilde{\mathbf{h}}_j^H] \bar{\mathbf{h}}_k\!\!=\!\! \alpha_k\alpha_j\beta_k(1-\beta_j) \boldsymbol{\theta}^H \mathbf{D}_k \boldsymbol{\theta},
\end{align}
where $\mathbf{D}_k \triangleq \mathrm{diag}(\bar{\mathbf{g}}_k)^H \mathbf{F}^H (\mathbf{F}\mathbf{F}^H) \mathbf{F} \mathrm{diag}(\bar{\mathbf{g}}_k)$. By symmetry, the third term is $\mathbb{E}[|\tilde{\mathbf{h}}_k^H\bar{\mathbf{h}}_j|^2] = \alpha_k\alpha_j\beta_j(1-\beta_k) \boldsymbol{\theta}^H \mathbf{D}_j \boldsymbol{\theta}$.

The final NLoS-NLoS interaction term involves the fourth-order moment. Using the standard property $\mathbb{E}[|\mathbf{x}^H\mathbf{y}|^2] = \mathrm{tr}(\mathbb{E}[\mathbf{x}\mathbf{x}^H]\mathbb{E}[\mathbf{y}\mathbf{y}^H])$ for independent Gaussian vectors, we derive:
\begin{align}
\mathbb{E}[|\tilde{\mathbf{h}}_k^H\tilde{\mathbf{h}}_j|^2] &= \mathrm{tr}\left( \alpha_k(1-\beta_k)\mathbf{F}\mathbf{F}^H \cdot \alpha_j(1-\beta_j)\mathbf{F}\mathbf{F}^H \right) \nonumber \\
&= \alpha_k\alpha_j(1-\beta_k)(1-\beta_j) \mathrm{tr}((\mathbf{F}\mathbf{F}^H)^2).
\end{align}
Summing these components yields the expressions in \eqref{E1} and \eqref{E2}.

\bibliographystyle{IEEEtran}
\bibliography{IEEEabrv,ref}
\end{document}